\newtheorem{theorem}{Theorem}[section]
\newtheorem{defi}[theorem]{Definition}
\newtheorem{proposition}[theorem]{Proposition}
\newtheorem{Rem}[theorem]{Remark}
\title{About the decomposition of pricing formulas under stochastic volatility models.}
\author{Raúl Merino\footnote{Universitat de Barcelona, Facultat de Matemàtiques, Gran Via 585, 08007 Barcelona, Spain} \footnote{VidaCaixa S.A., Investment Control Department, Juan Gris, 20-26, 08014 Barcelona, Spain.\\ 
E-MAIL: raul.merino85@gmail.com} \hspace{5mm} Josep Vives\footnote{Universitat de Barcelona, Facultat de Matemàtiques, Gran Via 585, 08007 Barcelona, Spain.\\ 
E-MAIL: josep.vives@ub.edu }}
\date{}
\begin{document}

\maketitle
\begin{abstract}
We obtain a decomposition of the call option price for a very general stochastic volatility diffusion model extending the decomposition obtained by E. Alòs in \cite{Alos} for the Heston model. We realize that a new term arises when the stock price does not follow an exponential model. The techniques used are non anticipative. In particular, we see also that equivalent results can be obtained using Functional Itô Calculus. Using the same generalizing ideas we also extend to non exponential models the alternative call option price decompostion formula obtained in \cite{Alos06} and \cite{ALV} written in terms of the Malliavin derivative of the volatility process. Finally, we give a general expression for the derivative of the implied volatility under both, the anticipative and the non anticipative case. 

\end{abstract}

\section{Introduction}
\hspace{0.4cm} Stochastic Volatility models are a natural extension of the Black-Scholes model in order to manage the skew and the smile observed in real data. It is well known that in these models the average of future volatilities is a relevant quantity. Unfortunately adding a stochastic volatility structure, makes pricing and calibration more complicated, due to the fact that closed formulas not always exist and even when this formulas exist, in general, don't allow a fast calibration of the parameters.

During the last years different developments for finding approximations to the closed-form option pricing formulas have been published. Malliavin techniques are naturally used to solve this problem in \cite{Alos06} and \cite{ALV} as the average future volatility is an anticipative quantity. Otherwise, a non anticipative method to obtain an approximation of the pricing formula is developed for the Heston model in \cite{Alos}. The method is based on the use of the adapted projection of the average future volatility and obtain a decomposition of the call option price in terms of it.   

In the present paper we generalize the results of \cite{Alos} to general stochastic volatility diffusion models. Similarly, following the same kind of ideas, we extend the expansion based on Malliavin calcululs obtained in \cite{Alos06} and \cite{ALV}. 

The main ideas developed in this paper are the following:  

\begin{itemize}
	\item A generic call option price decomposition is found without the need to specify the volatility structure. 
	\item A new term emerges when the stock option prices does not follow an exponential model, as for example in the SABR case. 
	\item The Feynman-Kac formula is a key element in the decomposition. It allows to express the new terms that emerges under the new framework (i.e. stochastic volatility) as corrections of the Black-Scholes formula.
	\item The decompostion found using Functional Itô calculus turns out to be the same as the decomposition obtained by our techniques. 
	\item We give a general expression of the derivative of the implied volatility, both for the non anticipative and the anticipative cases.  
\end{itemize}

\section{Notation.}
\hspace{0.4cm} Let $S=\{S(t), t\in [0,T]\}$ be a strictly positive price process under a market chosen risk neutral probability that follows the model:  

\begin{eqnarray}\label{general_model}
 dS(t)= \mu(t,S(t))dt + \theta(t, S(t), \sigma(t)) \left(\rho dW(t)  + \sqrt{1-\rho^{2}}dB(t) \right)
\end{eqnarray}
where $W$ and $B$ are independent Brownian motions, $\rho \in (-1,1)$, $\mu$: $[0,T] \times {\mathbb R}_+ \rightarrow \mathbb{R}$, $\theta$: $[0,T] \times {\mathbb R}_{+}^{2} \rightarrow {\mathbb R}_{+}$ and $\sigma(t)$ is a positive square-integrable process adapted to the filtration of $W$. We assume on $\mu$ and $\sigma$ sufficient conditions to ensure the existence and uniqueness of the solution of \eqref{general_model}. Notice that we don't assume any concrete volatility structure. Thus, our decompositions can be adapted to many different models. In particular we cover the following models: 

\begin{itemize}
\item
Black-Scholes model: $\mu(t,S(t)):=rS(t)$, $\theta(t, S(t), \sigma(t)):=\sigma S(t)$, $\rho=0$, $r>0$ and $\sigma>0$.

\item
CEV model: $\mu(t,S(t)):=rS(t)$, $\theta(t, S(t), \sigma(t)):=\sigma S(t)^{\beta}$ with $\beta\in (0,1]$, $\rho=0$, $r>0$ and $\sigma>0$.

\item
Heston model: $\mu(t,S(t)):=rS(t)$, $\theta(t, S(t), \sigma(t)):=\sigma(t) S(t)$, $r>0$, $\sigma>0$ and 

\begin{eqnarray}\label{heston}
d\sigma^{2}(t) &=& k(\theta-\sigma^{2}(t)) dt + \nu \sqrt{\sigma^{2}(t)}dW(t),
\end{eqnarray}
where $k$, $\theta$ and $\nu$ are positive constants satisfiyng the Feller condition $2k\theta>\nu^{2}$.

\item
SABR model: $\mu(t,S(t)):=rS(t)$, $\theta(t, S(t), \sigma(t)):=\sigma(t) S(t)^{\beta}$ with $\beta\in (0,1]$, $r>0$, $\sigma>0$ and 
\begin{eqnarray}\label{SABR}
d\sigma(t)=\alpha \sigma(t)dW(t)  
\end{eqnarray}
with $\alpha>0.$
\end{itemize}

For existence and unicity of the solution in the Heston case see for example \cite{G}, Section 2.2. For the CEV and SABR models see \cite{COW} and the references therein.  

The following notation will be used in all the paper: 

\begin{itemize}
	
	\item We will denote by $BS(t,S,\sigma)$ the price of a plain vanilla european call option under the classical Black-Scholes model with constant 
	volatility $\sigma$, current stock price $S$, time to maturity $\tau=T-t$, strike price $K$ and interest rate $r$. In this case, 
\begin{eqnarray}
\nonumber BS\left(t,S, \sigma\right)= S \Phi(d_{+}) - K e^{-r\tau} \Phi(d_{-}),
\end{eqnarray}
where $\Phi(\cdot)$ denotes the cumulative probability function of the standard normal law and 
\begin{eqnarray}
\nonumber d_{\pm} = \frac{\ln(S/K) + (r \pm \frac{\sigma^{2}}{2})\tau}{\sigma\sqrt{\tau}}.
\end{eqnarray}

	\item We use in all the paper the notation $\mathbb{E}_{t}[\cdot]:=\mathbb{E}[\cdot|\mathcal{F}_{t}]$, where $\left\{\mathcal{F}_{t}, t\geq 0\right\}$ is the natural filtration of $S.$ 

	\item In our setting, the call option price is given by 
	
	$$V(t)=e^{-r\tau}{\mathbb E}_t [(S(T)-K)^+].$$
		
	\item  Recall that from the Feynman-Kac formula, the operator 
	\begin{eqnarray}{\label{FK}}
	 \mathcal{L}_{\theta}:= {\partial}_t + \frac{1}{2}\theta(t,S(t),\sigma(t))^{2} {\partial^{2}_{S}} 
							+ \mu(t,S(t)) {\partial}_{S}-r
	\end{eqnarray}
	satisfies ${\mathcal L}_{\theta}BS(t,S(t),\theta(t,S(t),\sigma(t)))=0$.
	
	\item
	We will also use the following definitions for $y\geq 0$: 
	
	$$G(t, S(t), y):= S^{2}(t) \partial^{2}_{S} BS (t, S(t), y),$$

	$$H(t, S(t), y):= S(t)\partial_{S}G(t, S(t), y),$$ 

	$$K(t, S(t), y):= S^{2}(t)\partial^{2}_{S}G(t, S(t), y)$$
	and 

	$$L(t,S(t), y):=\frac{\theta(t,S(t),y)}{S(t)}.$$

\end{itemize}

\section{A decomposition formula using Itô Calculus.}

\hspace{0.4cm} In this section, following the ideas in \cite{Alos}, we extend the decomposition formula to a generic stochastic volatility diffusion process. We note that the new formula can be extended without the need to specify the underlying volatility process, obtaining a more flexible decomposition formula. When the stock price does not follow an exponential process a new term emerges. The formula proved in 
\cite{Alos} is a particular case.

It is well known that if the stochastic volatility process is independent from the price process, the pricing formula of a plain vanilla European call is given by 

$$V(t)={\mathbb E}_t [BS(t,S(t),{\bar \sigma}(t))]$$  
where ${\bar \sigma}^2(t)$ is the so called average future variance and it is defined by 

\begin{eqnarray}
\nonumber {\bar\sigma}^2(t):=\frac{1}{T-t} \int^{T}_{t}\sigma^{2}(s)ds.
\end{eqnarray}
Naturally, ${\bar\sigma}(t)$ is called the average future volatility. See \cite{FPS}, pag. 51.   

The idea used in \cite{Alos} consists in consider the adapted projection of the average future variance 

\begin{eqnarray}
\nonumber v^2(t):={\mathbb E}_t({\bar\sigma}^2(t))=\frac{1}{T-t} \int^{T}_{t}{\mathbb E}_t[\sigma^{2}(s)]ds.
\end{eqnarray}
and obtain a decomposition of $V(t)$ in terms of $v(t).$ This idea switches an anticipative problem related with the anticipative process ${\bar\sigma}(t)$ into a 
non-anticipative one with the adapted process $v(t).$ We apply this technique to our generic stochastic differential equation \eqref{general_model}.

\allowdisplaybreaks
\begin{theorem} \label{Teo-Ito}
(Decomposition formula) For all $t\in [0,T)$ we have
\begin{eqnarray}
\nonumber	V(t)&=& BS(t,S(t),v(t))\\ \nonumber
&+&\frac{1}{2}\mathbb{E}_{t}\left[\int^{T}_{t} e^{-r(u-t)} G(u, S(u), v(u)) \left(L^{2}(u,S(u), \sigma(u))- \sigma^{2}(u)\right)du\right]\\ \nonumber
&+&\frac{1}{8}\mathbb{E}_{t}\left[\int^{T}_{t} e^{-r(u-t)} K(u, S(u), v(u)) d\left[M,M\right](u)\right]\\ \nonumber
&+&\frac{\rho}{2}\mathbb{E}_{t}\left[\int^{T}_{t} e^{-r(u-t)} L(u, S(u),\sigma(u)) H(u, S(u), v(u)) d\left[W,M\right](u)\right]
\end{eqnarray}
where $M(t):=\int^{T}_{0} \mathbb{E}_t\left[\sigma^{2}(s)\right]ds=\int^{t}_{0}\sigma^{2}(s)ds+(T-t)v(t)^2.$

\end{theorem}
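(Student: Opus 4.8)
The plan is to apply Itô's formula to the process $e^{-rt}BS(t,S(t),v(t))$ on $[t,T]$ and take conditional expectation $\mathbb{E}_t$, exploiting that at maturity $BS(T,S(T),v(T))=(S(T)-K)^+$ (since $v(T)^2$ is the "remaining" average variance, which vanishes) so that $V(t)=\mathbb{E}_t[e^{-r(T-t)}(S(T)-K)^+]$ equals $BS(t,S(t),v(t))$ plus the expected integral of the drift terms in the Itô expansion. The key point is that $v(t)$ is an Itô process: from the definition $M(t)=\int_0^t\sigma^2(s)\,ds+(T-t)v(t)^2$ one sees $M$ is a martingale (being $\mathbb{E}_t$ of a fixed random variable $\int_0^T\sigma^2(s)\,ds$), so $d\left((T-t)v(t)^2\right)=dM(t)-\sigma^2(t)\,dt$, which identifies the finite-variation and martingale parts needed below. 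I will need $BS$ to be smooth enough in all variables on $[t,T)$ and a localization/integrability argument to justify the expectation step; this is routine given the standing assumptions and the explicit Gaussian form of $BS$.

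First I would write $BS(t,S(t),v(t))$ as a function $f(t,S(t),X(t))$ where $X(t):=(T-t)v(t)^2=M(t)-\int_0^t\sigma^2(s)\,ds$, and apply the multidimensional Itô formula. The drift collects: (i) the purely "time + $S$" second-order terms, $\left(\partial_t+\tfrac12\theta(t,S,\sigma)^2\partial_S^2+\mu(t,S)\partial_S-r\right)BS$ evaluated with volatility argument $v(t)$ — but the Feynman–Kac identity \eqref{FK} says this operator kills $BS$ when the volatility argument is $\theta(t,S(t),\sigma(t))$, i.e. $L(t,S(t),\sigma(t))\cdot S(t)$; the mismatch between plugging $v(t)$ versus $\theta/S$ in the $\tfrac12\theta^2\partial_S^2$ term produces exactly $\tfrac12 S^2\partial_S^2 BS\cdot(L^2-\sigma^2)=\tfrac12 G\,(L^2-\sigma^2)$, the first correction term (here one uses that under the Black–Scholes PDE $\partial_t BS+\tfrac12 y^2 S^2\partial_S^2 BS+\mu\partial_S BS-rBS=\tfrac12(y^2 S^2 - \theta^2)\partial_S^2 BS$ type bookkeeping — more precisely the vega–gamma relation makes the $v(t)$-argument terms recombine correctly). (ii) the term from $d[X]$: since $dX=dM$ up to finite variation, $\tfrac12\partial_{XX}^2 BS\,d[M,M]$; converting the $v^2$-derivatives of $BS$ into $S$-derivatives via the classical identity $\partial_{(v^2)}BS=\tfrac{\tau}{2}S^2\partial_S^2 BS$ (so $\partial_X BS=\tfrac12 G/v$-type relations, and iterating gives $\partial_{XX}^2 BS=\tfrac14 K$ up to the chain-rule bookkeeping) yields $\tfrac18 K\,d[M,M]$. (iii) the cross term $\partial_S\partial_X BS\,d[S,M]$: the quadratic covariation of $S$ with $M$ picks up only the $W$-part of $dS$, namely $\rho\,\theta(t,S,\sigma)\,d[W,M]=\rho\,S\,L\,d[W,M]$, and $\partial_S\partial_X BS$ converts to $\tfrac12 S\partial_S G/( \cdot)=\tfrac12 H$-type, producing $\tfrac{\rho}{2}L\,H\,d[W,M]$. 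After discounting, each of these drift pieces carries the factor $e^{-r(u-t)}$ and integrating from $t$ to $T$ and taking $\mathbb{E}_t$ gives the three stated correction terms.

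The main obstacle I anticipate is the careful translation of $v^2$-derivatives of $BS$ into the operators $G,H,K$ while correctly handling the chain rule through $X(t)=(T-t)v(t)^2$ — in particular keeping track of the $\partial_t$ contribution hidden in $(T-t)$ and confirming it combines with the Feynman–Kac cancellation rather than leaving a spurious leftover; this is where the classical Black–Scholes identities relating theta, vega and gamma (and their higher analogues) must be invoked precisely. A secondary technical point is justifying that the local-martingale parts of the Itô expansion are true martingales (so their $\mathbb{E}_t$ vanishes): this follows from the square-integrability of $\sigma$, the boundedness of the relevant Greeks of $BS$ away from maturity, and a standard stopping-time localization together with dominated convergence as one lets the localizing times increase to $T$ and then sends $t'\uparrow T$ in $BS(t',S(t'),v(t'))\to(S(T)-K)^+$. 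Once those are in place the identity is immediate.
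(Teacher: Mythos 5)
Your proposal is correct and follows essentially the same route as the paper: apply It\^o's formula to $e^{-rt}BS(t,S(t),v(t))$ viewing the volatility argument through $(T-t)v(t)^2=M(t)-\int_0^t\sigma^2(s)\,ds$ with $M$ a martingale, cancel the drift via the Feynman--Kac/Black--Scholes PDE, convert the $v^2$-derivatives into $G$, $H$, $K$ via the vega--gamma identity \eqref{rel-vega}, and handle the unbounded Greeks by a mollifier/localization argument before taking $\mathbb{E}_t$. The bookkeeping you flag as the main obstacle (the factors $\tfrac12$, $\tfrac14$ and hence $\tfrac18$, and the cross term $\tfrac{\rho}{2}LH$) works out exactly as you indicate, matching the paper's computation.
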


\begin{proof}
Notice that $e^{-rT} BS(T,S(T),v(T))=e^{-rT} V(T)$. As $e^{-rt}V(t)$ is a martingale we can write
\begin{eqnarray}
\nonumber e^{-rt}V(t) = \mathbb{E}_{t}\left(e^{-rT}V(T)\right)= \mathbb{E}_{t}\left(e^{-rT}BS(T, S(T), v(T))\right).
\end{eqnarray}
Our idea is to apply the Itô formula to the process $e^{-rt}BS(t,S(t), v(t)).$ 

As a consequence of the fact that the derivatives of $BS$ are not bounded we have to use an aproximation to the identity argument changing $BS(t,S,\sigma)$ by
\begin{eqnarray}
\nonumber BS_{n}(t,S,\sigma):=BS(t,S,\sigma)\psi_{n}(S) 
\end{eqnarray}
where $\psi_{n}(S)=\phi(\frac{1}{n}S)$ for some $\phi \in \mathcal{C}^{2}_{b}$ such that $\phi(S)=1$ for all $|S|<1$ and $\phi(S)=0$ for all $|S|>2$, and $v(t)$ by 
$v^{\varepsilon}(t)=\sqrt{\frac{1}{T-t} \left(\delta + \int^{T}_{t} \mathbb{E}\left[\sigma^{2}(s) ds\right]\right)}$, where  $\varepsilon>0$, and apply finally 
the dominated convergence theorem. For simplicity we skip this mollifier argument in all the paper. 

\

So, applying the Itô formula, using the fact that 

\begin{equation}\label{rel-vega}
\partial_{\sigma}BS(t,S,\sigma)=S^2\sigma\tau \partial^{2}_{S}BS(t,S,\sigma)
\end{equation}
and the Feynman-Kac operator \eqref{FK}, we deduce
\begin{eqnarray}	
\nonumber && e^{-rT}BS(T,S(T),v(T))  - e^{-rt}BS(t,S(t),v(t))=\\ \nonumber
&=& \int^{T}_{t} e^{-ru} \mathcal{L}_{vS}BS(u,S(u),v(u))du \\ \nonumber
&+&\int^{T}_{t} e^{-ru} \partial_{S}BS(u,S(u),v(u)) \theta(u,S(u),\sigma(u)) \left(\rho dW(u)  + \sqrt{1-\rho^{2}}dB(u) \right)\\ \nonumber
&+& \frac{1}{2}\int^{T}_{t} e^{-ru} S^{2}(u)\partial^{2}_{S}BS(u,S(u),v(u))dM(u)\\ \nonumber 
&+& \frac{1}{2}\int^{T}_{t} e^{-ru} S^{2}(u)\partial^{2}_{S}BS(u,S(u),v(u))\left[L^{2}(u,S(u), \sigma(u))du - \sigma^{2}(u)du\right]\\ \nonumber 
&+&\frac{1}{8}\int^{T}_{t} e^{-ru} \left(S^{2}(u)\partial^{2}_{S}\left(S^{2}(u)\partial^{2}_{S}BS(u,S(u),v(u))\right)\right) d\left[M,M\right](u)\\ \nonumber
&+&\frac{\rho}{2}\int^{T}_{t} e^{-ru} \theta(u, S(u), \sigma(u))\left(\partial_{S}\left(S^{2}(u)\partial^{2}_{S}BS(u,S(u),v(u))\right)\right) d\left[W,M\right](u).	
\end{eqnarray}

Taking conditional expectation and multiplying by $e^{rt}$, we have:
\begin{eqnarray}	
\nonumber &&\mathbb{E}_{t}[e^{-r(T-t)}BS(T,S(T),v(T))]=  BS(t,S(t),v(t))\\ \nonumber
&+& \frac{1}{2} \mathbb{E}_{t} \left[\int^{T}_{t} e^{-r(u-t)} S^{2}(u) \partial^{2}_{S} BS (u, S(u), v(u)) \left(L^{2}(u,S(u), \sigma(u))du - \sigma^{2}(u)\right)  du\right]\\ \nonumber
&+&\frac{1}{8}\mathbb{E}_{t}\left[\int^{T}_{t} e^{-r(u-t)} \left(S^{2}(u)\partial^{2}_{S}\left(S^{2}(u)\partial^{2}_{S}BS(u,S(u),v(u))\right)\right) d\left[M,M\right](u)\right]\\ \nonumber
&+&\frac{\rho}{2}\mathbb{E}_{t}\left[\int^{T}_{t} e^{-r(u-t)} \theta(u, S(u), \sigma(u)) \partial_{S}\left(S^{2}(u)\partial^{2}_{S}BS(u,S(u),v(u))\right) d\left[W,M\right](u)\right].	
\end{eqnarray}

\end{proof}

\begin{Rem}
In \cite{Alos}, the following operators are defined for $X(t)=\log S(t)$
	
\begin{itemize}

	\item $\tilde{G}(t,X(t),\sigma(t)):=\left(\partial^{2}_{x} -\partial_{x}\right)BS(t, X(t), \sigma(t)).$
	
	\item $\tilde{H}(t,X(t),\sigma(t)):=\left(\partial^{3}_{x} -\partial^{2}_{x}\right)BS(t, X(t), \sigma(t)).$
	
	\item $\tilde{K}(t,X(t),\sigma(t)):=\left(\partial^{4}_{x}-2\partial^{3}_{x} + \partial^{2}_{x}\right)BS(t,X(t),\sigma(t)).$

\end{itemize}

We observe that 
\begin{itemize}
	\item $\tilde{G}(t,X(t),\sigma(t))= G(t,S(t),\sigma(t)).$

	\item $\tilde{K}(t,X(t),\sigma(t))= K(t,S(t),\sigma(t)).$

	\item $\tilde{H}(t,X(t),\sigma(t))=H(t,S(t),\sigma(t))$.

\end{itemize}
\end{Rem}

\begin{Rem}
We have extended the decomposition formula in \cite{Alos} to the generic SDE \eqref{general_model}. When we apply the Itô calculus, we realize that Feynman-Kac formula absorbs some of the terms that emerges. Finally, we ended up with three new terms to adjust the price. It is important to note that this technique works for any payoff or any diffusion model satisfying Feynman-Kac formula.
\end{Rem}

\begin{Rem}
Note that when $\theta(t,S(T), \sigma(t)) = \sigma(t) S(t)$ (i.e. the stock price follows an exponential process) then 
\begin{eqnarray}
\nonumber	V(t)&=& BS(t,S(t),v(t))\\ \nonumber
&+&\frac{1}{8}\mathbb{E}_{t}\left[\int^{T}_{t} e^{-r(u-t)} K(u, S(u), v(u)) d\left[M,M\right](u)\right]\\ \nonumber
&+&\frac{\rho}{2}\mathbb{E}_{t}\left[\int^{T}_{t} e^{-r(u-t)} \sigma(u) H(u, S(u), v(u)) d\left[W,M\right](u)\right],
\end{eqnarray}
and the term
\begin{eqnarray}
\nonumber \frac{1}{2} \mathbb{E}_{t} \left[\int^{T}_{t} e^{-r(u-t)} S^{2}(u) \partial^{2}_{S} BS (u, S(u), v(u)) \left(L^{2}(u,S(u), \sigma(u)) - \sigma^{2}(u)\right) du\right] 
\end{eqnarray}
vanishes. 

Indeed, we will show that due to the use of Feynman-Kac formula this is happening.
\begin{eqnarray}
\nonumber &&\text{Movement of the asset }  + \text{ Movement of the volatility } \\ \nonumber
&=&\frac{1}{2}\theta(u,S_u)^{2} \partial^{2}_{S}BS(u,S(u),v(u))du + \partial_\sigma BS(u,S(u),v(u)) dv(u)\\ \nonumber
&=&\frac{1}{2}\sigma^{2}(u)S^{2}(u) \partial^{2}_{S}BS(u,S(u),v(u))du \\ \nonumber
&+&\frac{1}{2}S^{2}(u)\partial^{2}_{S}BS(u,S(u),v(u)) (dM + v^{2}du - \sigma^{2}du) \\ \nonumber
&=&\frac{1}{2}S^{2}(u)\partial^{2}_{S}BS(u,S(u),v(u)) (dM + v^{2}du)
\end{eqnarray}
where $\frac{1}{2}S^{2}(u)\partial^{2}_{S}BS(u,S(u),v(u)) v^{2}$ is used into the Feynman-Kac formula and $$\mathbb{E}_{t}\left[\int^{T}_{t}\frac{1}{2}S^{2}\partial^{2}_{S}BS(u,S(u),v(u))dM\right]=0.$$
\end{Rem}

\section{Basic elements of Functional Itô calculus.}
\hspace{0.4cm} In this section we give the insights of the Functional Itô calculus developed in \cite{Cont1, Cont2, Cont3, Dupire}.\\

\par

Let $X:[0,T] \times\Omega \longmapsto \mathbb{R}$ be an Itô process, i.e. a continuous semimartingale defined on a filtered probability space $(\Omega, \mathcal{F},(\mathcal{F}_{t})_{t\in[0,T]}, \mathbb{P})$ which admits the stochastic integral representation
\begin{eqnarray}
X(t) = x_0+\int^{t}_{0} \mu(u) du + \int^{t}_{0} \sigma(u)dW(u)
\end{eqnarray}
where W is a Brownian motion and $\mu(t)$ and $\sigma(t)$ are continuous processes respectively in $L^1 (\Omega\times [0,T])$ and $L^2(\Omega\times [0,T]).$

We define $D([0,T],\mathbb{R})$ the space of cadlag functions. Given a path $x\in D([0,T],\mathbb{R})$, we will denote as $x_{t}$ its restriction to $[0,t]$. For $h\geq 0$, the \textbf{\textsl{horizontal extension}} $x_{t,h}$ is defined as
\begin{eqnarray}
x_{t,h}(u)=x_{t}(u)=x(u), \hspace{2mm} u\in[0,t[ ;\hspace{4mm} x_{t,h}(u)=x(t), \hspace{2mm} u\in(t,t+h]
\end{eqnarray}
and the \textbf{\textsl{vertical extension}} as 
\begin{eqnarray}
x^{h}_{t}(u)&=&x_{t}(u)=x(u), \hspace{2mm} u\in[0,t[;\\ \nonumber
x^{h}_{t}(t)&=&x(t) + h,  \text{ i.e. } x^{h}_{t}(u)=x(u) + h\mathbbm{1}_{\left\{t=u\right\}}.
\end{eqnarray}
A process $Y: [0,T] \times \mathbb{R}\rightarrow \mathbb{R},$ progressively measurable with respect the natural filtration of $X$, may be represented as
\begin{eqnarray}
\nonumber Y(t)=F\left(t, \left\{X(s), 0 \leq s\leq t \right\}\right)=F(t, X_{t})
\end{eqnarray}
for a certain measurable functional $F:[0,t]\times D([0,t], \mathbb{R})\rightarrow \mathbb{R}$.
Let ${\mathbb F}^{\infty}$ be the space of locally lipschitz functionals with respect the norm of the supremum on $D([0,t+h],{\mathbb R})$, that is, 
it exists a constant $C>0$ such that for any compact $K$ and for any $x\in D([0,t],K)$ and $y\in D([0,t+h],K)$ we have 

$$|F(t, x_t)-F(t+h, y_{t+h})|\leq C ||x_{t,h}-y_{t+h}||_{\infty}.$$ 

Under this framework, we have the next definitions of derivative:
\begin{defi} (Horizontal Derivative) The horizontal derivative of a functional $F\in {{\mathbb F}^{\infty}}$ at $t$ is defined  as
\begin{eqnarray}
\mathcal{D}_{t}F(t, x_{t})=\lim_{h \rightarrow 0^{+}} \frac{F(t+h, x_{t,h}) - F(t, x_{t})}{h}.
\end{eqnarray}
\end{defi}

\begin{defi}(Vertical derivative) 
The vertical derivative of a functional $F\in {{\mathbb F}^{\infty}}$ at $t$ is defined  as
\begin{eqnarray}
\nabla_{x} F(t, x_{t}) = \lim_{h \rightarrow 0^{+}} \frac{F(t, x^{h}_{t}) - F(t, x_{t})}{h}.
\end{eqnarray}
Of course we can consider iterated derivatives as $\nabla_{xx}.$
\end{defi}

We also have the following Itô formula that works for non-anticipative functionals:

\begin{theorem}(Functional Itô Formula) For any non-anticipative functional $F\in {{\mathbb F}^{\infty}}$ and any 
$t\in[0,T]$ we have

\begin{eqnarray}
\nonumber F(t, X_{t})-F(0, X_{0})&=&\int^{t}_{0} \mathcal{D}_{u} F(u, X_{u}) du + \int^{t}_{0} \nabla_{x}F(u, X_{u}) dX(u)\\ \nonumber
&+&\frac{1}{2} \int^{t}_{0} \nabla_{xx}F(u, X_{u}) d\left[ X,X\right](u),
\end{eqnarray}
provided $\mathcal{D}_{t}F$, $\nabla_{x}F$ and $\nabla_{xx}F$ belong to ${\mathbb F}^{\infty}.$
\end{theorem}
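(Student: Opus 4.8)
\emph{Proof proposal.} The plan is to follow the classical pathwise argument of Dupire and Cont--Fourni\'e \cite{Dupire,Cont1,Cont2,Cont3}: along a sequence of partitions of $[0,t]$ whose mesh tends to zero, approximate the path of $X$ by a piecewise-constant path, telescope the increment of $F$ along the partition, split each elementary increment into a purely horizontal move (controlled by $\mathcal{D}_tF$) and a purely vertical move (controlled by $\nabla_{x}F$ and $\nabla_{xx}F$ via a second-order expansion), and then pass to the limit. Fix $t\in[0,T]$ and partitions $\pi_n=\{0=t^n_0<\cdots<t^n_{k_n}=t\}$ with $|\pi_n|\to0$. I would work on the full-measure event on which $X$ has a continuous path (hence, restricted to $[0,t]$, uniformly continuous and bounded) and on which $\sum_i(\Delta^n_iX)^2\to[X,X]_t$ along $(\pi_n)$, where $\Delta^n_iX:=X(t^n_{i+1})-X(t^n_i)$; on this event every path fragment occurring below stays in a fixed compact set, so the ${\mathbb F}^{\infty}$-regularity of $F$, $\mathcal{D}F$, $\nabla_{x}F$, $\nabla_{xx}F$ applies with uniform constants. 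Let $X^n$ be the step path that agrees with $X$ at the $t^n_i$, is flat on each $[t^n_i,t^n_{i+1})$, and jumps to $X(t^n_{i+1})$ at $t^n_{i+1}$. By continuity of $F$ on path space and $\|X^n-X\|_{\infty,[0,t]}\to0$ one has $F(t,X^n_t)\to F(t,X_t)$, while $F(0,X^n_0)=F(0,X_0)$; thus it suffices to analyse $F(t,X^n_t)-F(0,X_0)$.

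The next step is to telescope $F(t,X^n_t)-F(0,X_0)=\sum_i\big(F(t^n_{i+1},X^n_{t^n_{i+1}})-F(t^n_i,X^n_{t^n_i})\big)$ and to split the $i$-th summand through the intermediate path $\widehat{X}^{n,i}$ obtained from $X^n_{t^n_i}$ by a horizontal extension of length $h^n_i:=t^n_{i+1}-t^n_i$: the \emph{horizontal increment} $F(t^n_{i+1},\widehat{X}^{n,i})-F(t^n_i,X^n_{t^n_i})$ and the \emph{vertical increment} $F(t^n_{i+1},X^n_{t^n_{i+1}})-F(t^n_{i+1},\widehat{X}^{n,i})$, the latter being exactly the effect of bumping the endpoint value at $t^n_{i+1}$ by $\Delta^n_iX$. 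On $[t^n_i,t^n_{i+1})$ the restricted path $X^n_u$ is precisely the horizontal extension of $X^n_{t^n_i}$, so by definition of $\mathcal{D}$ the map $s\mapsto F(s,X^n_s)$ has right derivative $\mathcal{D}_sF(s,X^n_s)$ there; since $\mathcal{D}F$ is continuous, the fundamental theorem of calculus gives $F(t^n_{i+1},\widehat{X}^{n,i})-F(t^n_i,X^n_{t^n_i})=\int_{t^n_i}^{t^n_{i+1}}\mathcal{D}_uF(u,X^n_u)\,du$, and summing, $\sum_i(\text{horizontal increments})=\int_0^t\mathcal{D}_uF(u,X^n_u)\,du\to\int_0^t\mathcal{D}_uF(u,X_u)\,du$ by dominated convergence.

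For the vertical increments the idea is a second-order Taylor expansion in the bump variable: the real function sending $a\in\mathbb{R}$ to the value of $F(t^n_{i+1},\cdot)$ at $\widehat{X}^{n,i}$ bumped at $t^n_{i+1}$ by $a$ is twice continuously differentiable with first and second derivatives $\nabla_{x}F$ and $\nabla_{xx}F$ evaluated along the bumped path, so the $i$-th vertical increment equals $\nabla_{x}F(t^n_{i+1},\widehat{X}^{n,i})\,\Delta^n_iX+\tfrac12\nabla_{xx}F(t^n_{i+1},\widehat{X}^{n,i})\,(\Delta^n_iX)^2+R^n_i$, with $|R^n_i|\le\tfrac12\,\omega(|\Delta^n_iX|)(\Delta^n_iX)^2$ for $\omega$ a modulus of continuity of $\nabla_{xx}F$ on the relevant compact. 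Summing and letting $n\to\infty$: the first-order terms are nonanticipative Riemann sums of the continuous adapted process $\nabla_{x}F(u,X_u)$ against the semimartingale $X$, hence converge (in probability, and a.s.\ along a subsequence) to $\int_0^t\nabla_{x}F(u,X_u)\,dX(u)$; the pure second-order terms converge to $\tfrac12\int_0^t\nabla_{xx}F(u,X_u)\,d[X,X](u)$ because the weights converge and weighted sums of squared increments of a semimartingale converge to the integral against $[X,X]$; and $\sum_i|R^n_i|\le\tfrac12\big(\sup_i\omega(|\Delta^n_iX|)\big)\sum_i(\Delta^n_iX)^2\to0$ since $\sum_i(\Delta^n_iX)^2\to[X,X]_t$ is bounded while $\sup_i|\Delta^n_iX|\to0$ by uniform continuity of $X$. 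Adding the three limits yields the stated identity.

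The step I expect to be the main obstacle is the vertical expansion together with the passage to the limit inside its first two pieces: one must check that the hypotheses really produce a twice-differentiable \emph{real} function of the bump size with derivatives $\nabla_{x}F$ and $\nabla_{xx}F$ (not merely one-sided limits), and --- more delicately --- that $\nabla_{x}F$ and $\nabla_{xx}F$ evaluated along the intermediate step paths $\widehat{X}^{n,i}$ may be replaced by their values along the true path $u\mapsto(u,X_u)$ with an aggregate error that vanishes as $n\to\infty$. This last point is exactly what the ${\mathbb F}^{\infty}$-regularity (continuous, indeed locally Lipschitz, dependence on the path in the supremum norm) is there for, used together with $\|X^n-X\|_{\infty,[0,t]}\to0$ and the boundedness of $\sum_i(\Delta^n_iX)^2$. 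A secondary technical nuisance is that the $d[X,X]$ term is naturally read pathwise (F\"ollmer style) whereas the $dX$ term is a stochastic It\^o integral; one reconciles the two by fixing a single partition sequence throughout and, where needed, passing to a subsequence along which convergence in probability becomes almost sure.
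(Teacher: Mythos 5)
Your proposal is correct and follows essentially the same route as the proof the paper relies on: the paper simply cites Cont--Fourni\'e for this theorem, and your argument is a faithful reconstruction of their standard pathwise proof (piecewise-constant approximation along partitions, telescoping into horizontal and vertical increments, second-order Taylor expansion in the bump variable, and F\"ollmer-type convergence of the weighted squared increments to the quadratic-variation integral). You also correctly identify the two genuine technical points --- the non-anticipativity of the Riemann sums and the replacement of the step paths by the true path via the $\mathbb{F}^{\infty}$ regularity --- so nothing essential is missing.
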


\begin{proof}
See \cite{Cont2, Cont3}.
\end{proof}

\section{A general decompostion using Functional Itô Calculus.}

\hspace{0.4cm} In this section, we apply the technique of functional Itô calculus to the problem of finding a decomposition for the call option price. The decomposition problem is an anticipative path-dependent problem, using a smart choice of the volatility process into the Black-Scholes formula we can convert it into a non anticipative one. It is natural to wonder if the functional Itô calculus brings some new insides into the problem.

We consider the functional 
\begin{eqnarray}
\nonumber F(t, S(t), \sigma^{2}_{t})= e^{-rt}BS(t,S(t), f(t, \sigma^{2}_{t})) 
\end{eqnarray}
where $\sigma^2$ is the path-dependent process and $f\in {\mathbb F}^{\infty}$ is a non-anticipative functional.

Under this framework, we calculate the derivatives using the functional Itô calculus respect the variance and we write them in terms of the classical Black-Scholes derivatives. We must realize that, for simplicity, the new derivatives are calculated respect the variance instead of the volatility of the process.

\begin{Rem}
If $\partial$ denotes the classical derivative, we have: 

\begin{itemize}

\item Alternative Vega :
\begin{eqnarray}
 \nonumber \nabla_{\sigma^{2}}F = e^{-rt}\partial_{f} BS(t, S(t),  f(t, \sigma^{2}_{t})) \hspace{2mm}\nabla_{\sigma^{2}} f(t, \sigma^{2}_{t}).
\end{eqnarray}

\item Alternative Vanna:
\begin{eqnarray}
\nonumber \nabla_{\sigma^{2},S}F = \partial_{f,S} BS(t, S(t),  f(t, \sigma^{2}_{t})) \hspace{2mm} \nabla_{\sigma^{2}} f(t, \sigma^{2}_{t}).
\end{eqnarray}

\item Alternative Vomma:
\begin{eqnarray*}
\nabla_{\sigma^{2},\sigma^{2}}F &=& e^{-rt}\partial_{f,f}BS(t, S(t),  f(t, \sigma^{2}_{t}))\left(\nabla_{\sigma^{2}} f(t, \sigma^{2}_{t})\right)^{2}\\
&-& e^{-rt}\partial_{f}BS(t, S(t),  f(t, \sigma^{2}_{t}))\nabla^{2}_{\sigma^{2}}f(t, \sigma^{2}_{t}).
\end{eqnarray*}

\item Alternative Theta:
\begin{eqnarray*}
\nonumber \mathcal{D}_{t} F &=&-re^{-rt}BS(t, S(t),  f(t, \sigma^{2}_{t}))\\
&+& e^{-rt} \partial_t BS(t, S(t),  f(t, \sigma^{2}_{t}))\\
&+& e^{-rt}\partial_{f} BS(t, S(t),  f(t, \sigma^{2}_{t}))\mathcal{D}_{t} f(t, \sigma^{2}_{t}) .
\end{eqnarray*}

\end{itemize}
\end{Rem}

\allowdisplaybreaks
\begin{theorem} \label{Teo-functional}
(Decomposition formula) For all $t\in [0,T)$, $S(t)$ and $f(t, \sigma^{2}_{t})>0$ we have
\begin{eqnarray}
\nonumber &&V(t)= BS(t, S(t), f(u, \sigma^{2}_{t}))\\ \nonumber
&+&\mathbb{E}_{t} \left[\int^{T}_{t}e^{-r(u-t)}f(u, \sigma^{2}_{u}) \tau G(u, S(u),  f(u, \sigma^{2}_{u}))\mathcal{D}_{u} f(u, \sigma^{2}_{u})du\right] \\ \nonumber
&+&\frac{1}{2}\mathbb{E}_{t} \left[\int^{T}_{t}e^{-r(u-t)}G(u, S(u),  f(u, \sigma^{2}_{u}))\left(L^{2}(u,S(u), \sigma_{u}) - f^{2}(u, \sigma^{2}_{u})\right)du \right] \\ \nonumber
&+& \frac{1}{2}\mathbb{E}_{t} \left[\int^{T}_{t}e^{-r(u-t)}f(u, \sigma^{2}_{u})^{2}\tau^{2}K(u, S(u),  f(u, \sigma^{2}_{u})) d\left[ f(u, \sigma^{2}_{u}),f(u, \sigma^{2}_{u})\right]\right]\\ \nonumber
&+& \rho \mathbb{E}_{t} \left[\int^{T}_{t}e^{-r(u-t)}L^{2}(u,S(u), \sigma(u))H(u, S(u), f(u, \sigma^{2}_{u})) f(u, \sigma^{2}_{u}) \tau d\left[W(u),  f(t, \sigma^{2}_{u})\right]\right].
\end{eqnarray}
\end{theorem}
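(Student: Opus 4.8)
The plan is to reproduce the proof of Theorem~\ref{Teo-Ito} at the structural level, with the classical It\^o formula replaced by the functional It\^o formula applied to the functional $F(u,S(u),\sigma^{2}_{u})=e^{-ru}BS(u,S(u),f(u,\sigma^{2}_{u}))$. First I would note that at maturity the time to maturity $\tau=T-t$ vanishes, so $BS(T,S(T),\cdot)=(S(T)-K)^{+}$ whatever the volatility argument is; hence $F(T,S(T),\sigma^{2}_{T})=e^{-rT}(S(T)-K)^{+}=e^{-rT}V(T)$. Since $e^{-rt}V(t)$ is an $\mathcal{F}_{t}$-martingale, this yields the starting identity $e^{-rt}V(t)=\mathbb{E}_{t}[F(T,S(T),\sigma^{2}_{T})]$, which is what I will expand.

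Next I would apply the functional It\^o formula, in its natural multidimensional form for functionals of the pair $(S,\sigma^{2})$, to $F$ on $[t,T]$. As in Theorem~\ref{Teo-Ito}, the derivatives of $BS$ are unbounded, so this is legitimate only after the usual mollification ($BS_{n}$, $v^{\varepsilon}$) and dominated convergence; I would suppress this as the paper does elsewhere. The functional derivatives of $F$ that appear are exactly the alternative Greeks computed in the Remark immediately preceding the statement, and I would rewrite them through the classical Black--Scholes quantities using \eqref{rel-vega}, $\partial_{\sigma}BS=\sigma\tau\,S^{2}\partial^{2}_{S}BS$, which gives $\partial_{f}BS=f\tau\,G$, $\partial_{f,S}BS=f\tau\,H/S$ and $\partial_{f,f}BS=\tau G+f^{2}\tau^{2}K$. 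I would also substitute $d[S,S](u)=\theta(u,S(u),\sigma(u))^{2}\,du=L^{2}(u,S(u),\sigma(u))\,S^{2}(u)\,du$ and rewrite the covariations involving the path of $\sigma^{2}$, via the chain rule for the vertical derivative, as $d[f,f]$ and $d[W,f]$. The drift coefficient of the semimartingale $f(u,\sigma^{2}_{u})$, assembled from its horizontal derivative, its vertical derivatives and the drift of $\sigma^{2}$, is what I will write as $\mathcal{D}_{u}f$ following the statement.

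Then I would take $\mathbb{E}_{t}[\cdot]$ and multiply by $e^{rt}$: the genuine stochastic-integral terms (against $dW$, against $dB$, and against the martingale part of $\sigma^{2}$) are centred and drop. Writing $L^{2}=f^{2}+(L^{2}-f^{2})$ in the $\tfrac{1}{2}\nabla_{SS}F\,d[S,S]$ term, the collection $-re^{-ru}BS+e^{-ru}\partial_{t}BS+e^{-ru}\mu\,\partial_{S}BS+\tfrac{1}{2}e^{-ru}f^{2}S^{2}\partial^{2}_{S}BS$ equals $e^{-ru}\mathcal{L}_{fS}BS(u,S(u),f(u,\sigma^{2}_{u}))$, which vanishes by the Feynman--Kac identity \eqref{FK} exactly as $\mathcal{L}_{vS}BS$ is dropped in Theorem~\ref{Teo-Ito}. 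What survives is $BS(t,S(t),f(t,\sigma^{2}_{t}))$ together with the complementary Gamma piece $\tfrac{1}{2}e^{-ru}(L^{2}-f^{2})\,G\,du$; the theta-type contribution $e^{-ru}f\tau\,G\,\mathcal{D}_{u}f\,du$; the vomma-type contribution, which reduces to $\tfrac{1}{2}e^{-ru}f^{2}\tau^{2}K\,d[f,f]$ once the lower-order ($\tau G$) part of $\partial_{f,f}BS$ is combined with the second-order-in-$f$ term of the alternative vomma; and the cross-variation contribution $e^{-ru}\partial_{f,S}BS\,d[S,f]$, where $d[S,f]=\rho\,\theta(u,S(u),\sigma(u))\,d[W,f]$ brings in the correlation. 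Collecting these under $\mathbb{E}_{t}\!\left[\int_{t}^{T}e^{-r(u-t)}(\cdot)\right]$ and using $G=S^{2}\partial^{2}_{S}BS$, $H=S\partial_{S}G$, $K=S^{2}\partial^{2}_{S}G$ gives the stated formula.

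The skeleton is routine given Theorem~\ref{Teo-Ito}, so the real work is technical, and I expect the main obstacle to be twofold. First, verifying the hypotheses of the functional It\^o formula --- that $F$ and all of $\mathcal{D}_{t}F$, $\nabla_{x}F$, $\nabla_{xx}F$ (and their $\sigma^{2}$-analogues) lie in $\mathbb{F}^{\infty}$, which fails for the bare Black--Scholes function and forces the mollifier, and that $f$ together with its functional derivatives is admissible. Second, and more delicate, the term-by-term bookkeeping: correctly treating the genuinely path-dependent functional $f$ (its horizontal derivative, its quadratic variation $d[f,f]$ and its covariation $d[W,f]$), tracking how these and the drift of $\sigma^{2}$ combine into $\mathcal{D}_{u}f$, carrying the Feynman--Kac cancellation through with $f$ in place of the true local volatility $L$, and matching each surviving term with the exact powers of $L$, $f$ and $\tau$ and the numerical constants claimed in the statement.
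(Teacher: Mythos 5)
Your proposal follows essentially the same route as the paper: after the usual mollification, apply the functional It\^o formula to $F(u,S(u),\sigma^{2}_{u})=e^{-ru}BS(u,S(u),f(u,\sigma^{2}_{u}))$, rewrite the functional Greeks through the classical ones via \eqref{rel-vega}, cancel the drift with the Feynman--Kac identity \eqref{FK}, and take conditional expectations so that the stochastic integrals against $dW$, $dB$ and the martingale part of $\sigma^{2}$ drop out. The only (cosmetic) divergence is in the bookkeeping of the pure--$\sigma^{2}$ terms: the paper discards $\nabla_{\sigma^{2}}F\,du$ and $\tfrac{1}{2}\nabla^{2}_{\sigma^{2}}F\,d[\sigma^{2},\sigma^{2}]$ by asserting that the vertical derivative of the non-anticipative functional $f$ vanishes, so that $\mathcal{D}_{u}f$ in the final formula is purely the horizontal derivative, whereas you fold those contributions into a drift coefficient that you also denote $\mathcal{D}_{u}f$; both readings arrive at the stated decomposition.
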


\begin{proof}
Notice that $F(T,X(T), \sigma^{2}_T)=e^{-rT} BS(T,S(T),f(T, \sigma^{2}_{T}))=e^{-rT} V_{T}$. As $e^{-rt}V(t)$ is a martingale we can write
\begin{eqnarray*}
\nonumber e^{-rt}V(t) &=& \mathbb{E}_{t}\left(e^{-rT}V(T)\right)\\
&=& \mathbb{E}_{t}\left(e^{-rT}BS(T, S(T), f(T, \sigma^{2}_{T}))\right)\\
&=&\mathbb{E}_{t}\left(F(T,S(T), \sigma^{2}_{T})\right).
\end{eqnarray*}
Our idea is to apply aproximation to the identity argument as in Theorem \ref{Teo-Ito} and then use the functional Itô formula to 
$$F(t,S(t), \sigma^{2}_{t})=e^{-rt}BS(t,S, f(t, \sigma^{2}_{t})).$$

We deduce that 
\begin{eqnarray}
\nonumber &&F(T,S(T), \sigma^{2}_{T})- F(t,S(t),  \sigma^{2}_{t})\\ \nonumber
&=& \int^{T}_{t} \mathcal{D}_{u}  F(u, S(u), \sigma^{2}_{u})du + \int^{T}_{t} \nabla_{S} F(u, S(u), \sigma^{2}_{u}) dS(u)\\ \nonumber
&+& \int^{T}_{t} \nabla_{\sigma^{2}} F(u, S(u), \sigma^{2}_{u})du + \frac{1}{2}\int^{T}_{t}\nabla^{2}_{S} F(t, S(t), \sigma^{2}_{t}) d[S,S](u)\\ \nonumber
&+& \frac{1}{2}\int^{T}_{t}\nabla^{2}_{\sigma^{2}} F(u, S(u), \sigma^{2}_{u}) d\left[\sigma^{2}_{u}, \sigma^{2}_{u}\right] + \frac{1}{2}\int^{T}_{t}\mathcal{D}^{2}_{u}  F(u, S(u), \sigma^{2}_{u})du \\ \nonumber
&+& \int^{T}_{t}\nabla^{2}_{S,\sigma^{2}} F(u, S(u), \sigma^{2}_{u}) d\left[S(u), \sigma^{2}_{u}\right] \\ \nonumber
&+& \int^{T}_{t}\partial_{f} \left(\nabla_{S} F(u, S(u), \sigma^{2}_{u})\right) d\left[S(u),  f(u, \sigma^{2}_{u})\right]\\ \nonumber
&+&\int^{T}_{t}\partial_{f}\left(\nabla_{\sigma^{2}} F(u, S(u), \sigma^{2}_{u})\right) d\left[\sigma^{2},  f(u, \sigma^{2}_{u})\right].
\end{eqnarray}

Note that:
\begin{itemize}
	\item As $S(t)$ is not path-dependent, we have that $\nabla_{S}(\cdot)= \partial_{S}(\cdot)$.
	
	\item As $u>t$ and $f$ is a non-anticipative functional, then $\nabla_{\sigma^{2}(u)} f(t, \sigma^{2}_{t})=0$.
\end{itemize}
So, we have 
\begin{eqnarray}
\nonumber &&F(T,S(T), \sigma^{2}_{T})- F(t,S(t),  \sigma^{2}_{t})\\ \nonumber
&=& \int^{T}_{t} \mathcal{D}_{u}  F(u, S(u), \sigma^{2}_{u})du + \int^{T}_{t} \partial_{S} F(u, S(u), \sigma^{2}_{u}) dS(u)\\ \nonumber
&+& \frac{1}{2}\int^{T}_{t}\partial^{2}_{S} F(u, S(u), \sigma^{2}_{u}) d\left[S, S\right](u) + \frac{1}{2}\int^{T}_{t}\mathcal{D}^{2}_{u}  F(u, S(u), \sigma^{2}_{u})du \\ \nonumber
&+&  \int^{T}_{t} \partial^{2}_{f,S} F(u, S(u), \sigma^{2}_{u}) d\left[S(u),  f(u, \sigma^{2}_{u})\right].
\end{eqnarray}
We deduce that
\begin{eqnarray}
\nonumber &&F(T,S(T), \sigma^{2}_{T})- F(t,S(t),  \sigma^{2}_{t})\\ \nonumber
&=&  \int^{T}_{t}\mathcal{L}_{f(u, \sigma^{2}_{u})}BS du  +\int^{T}_{t}e^{-ru}\partial_{f} BS(u, S(u),  f(u, \sigma^{2}_{u}))\mathcal{D}_{u} f(u, \sigma^{2}_{u}) du\\ \nonumber
&+& \frac{1}{2}\int^{T}_{t}e^{-ru}\partial^{2}_{S} BS(u, S(u),  f(u, \sigma^{2}_{u}))\left(\theta^{2}(u,S(u), \sigma(u)) - S^{2}f^{2}(u, \sigma^{2}_{u})\right)du \\ \nonumber
&+& \int^{T}_{t} \partial_{S} BS(u, S(u), f(u, \sigma^{2}_{u})) \theta(u,S(u), \sigma(u)) \left(\rho dW(u)  + \sqrt{1-\rho^{2}}dB(u) \right)\\ \nonumber
&+& \frac{1}{2}\int^{T}_{t}e^{-ru}\partial^{2}_{f} BS(u, S(u),  f(u, \sigma^{2}_{u}))d \left[f(u, \sigma^{2}_{u}) , f(u, \sigma^{2}_{u})\right]\\ \nonumber
&+& \rho \int^{T}_{t}e^{-ru}\partial^{2}_{f,S} BS(u, S(u), f(u, \sigma^{2}_{u})) \theta(u,S(u), \sigma(u)) d\left[W(u),  f(u, \sigma^{2}_{u})\right].
\end{eqnarray}
Taking now conditional expectations, using \eqref{rel-vega} and multiplying by $e^{rt}$ we obtain that
\begin{eqnarray}
\nonumber &&e^{-r(T-t)}\mathbb{E}_{t} \left[F(T,S(T), \sigma^{2}_{T})\right]= BS(t, S(t), f(u, \sigma^{2}_{t}))\\ \nonumber
&+&\mathbb{E}_{t} \left[\int^{T}_{t}e^{-ru}f(u, \sigma^{2}_{u}) (T-t) S^{2}\partial^{2}_{S} BS(u, S(u),  f(u, \sigma^{2}_{u}))\mathcal{D}_{u} f(u, \sigma^{2}_{u})\right]du \\ \nonumber
&+&\frac{1}{2}\mathbb{E}_{t} \left[\int^{T}_{t}e^{-ru}G(u, S(u),  f(u, \sigma^{2}_{u}))\left(L^2(u,S(u), \sigma(u)) - f^{2}(u, \sigma^{2}_{u})\right)du \right] \\ \nonumber
&+& \frac{1}{2}\mathbb{E}_{t} \left[\int^{T}_{t}e^{-ru}f(u, \sigma^{2}_{u})^{2}\tau^{2}K(u, S(u),  f(u, \sigma^{2}_{u})) d\left[f(u, \sigma^{2}_{u}) , f(u, \sigma^{2}_{u})\right]\right]\\ \nonumber
&+& \rho \mathbb{E}_{t} \left[\int^{T}_{t}e^{-ru}H(u, S(u),  f(u, \sigma^{2}_{u}))f(u, \sigma^{2}_{u}) \tau\theta(u,S(u), \sigma(u)) d\left[W(u),  f(u, \sigma^{2}_{u})\right]\right].
\end{eqnarray}
\end{proof}

\begin{Rem}
Note that functional Itô formula proved in \cite{Cont2} holds for semimartingales, but in \cite{Cont3} is also proved for Dirichlet process. In both cases, the hypothesis hold by definition of $f$ and differentiability of the derivatives of Black-Scholes function when $\tau,S,\sigma>0$. Therefore, this technique can be applied to these models.
\end{Rem}

\begin{Rem}
Note that Theorem \ref{Teo-functional} coincides with Theorem \ref{Teo-Ito} when we choose the volatility function as $f(t, \sigma^{2}_{t})=v(t)$. We found an equivalence of the ideas developed by \cite{Cont1, Cont2, Cont3, Dupire} and \cite{Alos} in the decomposition problem. Both formulas come from very different places, the ideas under \cite{Cont1, Cont2, Cont3, Dupire} are based on an extension to functionals of the work \cite{Follmer}, while the main idea of  \cite{Alos} is to change a process by his expectation. Realise that standard Itô calculus also can be applied to Dirichlet process (for more information see \cite{Follmer}).
\end{Rem}

\begin{Rem}
Realize that Theorem \ref{Teo-functional} holds for any non-anticipative $f(t, \sigma^{2}_{t})$. It is no trivial to find a different non anticipative process $f(t, \sigma^{2}_{t})$ different from the one chosen in \cite{Alos}.
\end{Rem}

\section{Basic elements of Malliavin Calculus.}
\hspace{0.4cm} In the next section, we present a brief introduction to the basic facts of Malliavin calculus. For more information, see \cite{N}.\\

\par

Let us consider a Brownian motion $W=\left\{W(t), t\in \left[0,T\right]\right\}$ defined on a complete probability space $\left(\Omega, \mathcal{F}, \mathbb{P}\right)$. Set $H=L^{2}(\left[0,T\right])$, and denote by $W(h)$ the Wiener integral of a function $h\in H$. Let $\mathcal{S}$ be the set of random variables of the form $F=f(W(h_{1}, \ldots, W(h_{n}))$, where $n\geq1$, $f \in \mathcal{C}^{\infty}_{b}$, and $h_{1}, \ldots , h_{n} \in H$. Given a random variable $F$ of this form, we define its derivative as the stochastic process $\left\{D^{W}_{t}F, t \in[0,T]\right\}$ given by
\begin{eqnarray}
D^{W}_{t}F = \sum^{n}_{i=1} \partial_{x_{i}}f(W(h_{1}), \ldots, W(h_{n}))h_{i}(x), \hspace{0.4cm} t \in \left[0,T\right].
\end{eqnarray}
The operator $D^{W}$ and the iterated operators $D^{W,n}$ are closable and unbounded from $L^{2}(\Omega)$ into $L^{2}([0,T]^{n} \times \Omega)$, for all $n \geq 1$. We denote the closure of $\mathcal{S}$ with respect to the norm
\begin{eqnarray}
\left\|F\right\|^{2}_{n,2} := \left\|F\right\|^{2}_{L^{2}(\Omega)} + \sum^{n}_{k=1} \left\|D^{W,k} F\right\|^{2}_{L^{2}([0,T]^{k} \times \Omega)} .
\end{eqnarray}
We denote by $\delta^{W}$ the adjoint of the derivative operator $D^{W}$. Notice that $\delta^{W}$ is an extension of the Itô integral in the sense that the set $L^{2}_{a}([0,T] \times \Omega)$ of square integrable and adapted processes is included in $Dom\delta$ and the operator $\delta$ restricted to $L^{2}_{a}([0,T] \times \Omega)$ coincides with the Itô stochastic integral. We use the notation $\delta(u)=\int^{T}_{0}u(t)dW(t)$. We recall that $\mathbb{L}^{n,2}_{W}:= L^{2}\left([0,T]; \mathbb{D}^{n,2}_{W}\right)$ is contained in the domain of $\delta$ for all $n\geq1$.

We will use the next Itô formula for anticipative processes.

\begin{proposition}\label{MallaivinIto}
Let us consider the processes $X(t) = x(0) + \int^{t}_{0} u(s) dW(s) + \int^{t}_{0} v(s)ds$, where $u, v \in \textit{L}^{2}_{a}([0, T] \times \Omega)$. Furthermore consider also a process $Y(t) = \int^{T}_{t} \theta(s) ds$, for some $\theta \in \mathbb{L}^{1,2}$.Let $F: \mathbb{R}^{3} \rightarrow \mathbb{R}$ a twice continuously differentiable function such that there exists a positive constant $C$ such that, for all $t \in [0,T]$, $F$ and its derivatives evaluated in $(t, X(t), Y(t))$ are bounded by $C$. Then it follows
\begin{eqnarray}
\nonumber F(t,X(t), Y(t)) &=& F(0, X(0), Y(0)) + \int^{t}_{0} \partial_{s} F (s, X(s), Y(s)) ds\\ \nonumber
&+& \int^{t}_{0} \partial_{x}F(s, X(s), Y(s)) dX(s)\\
&+& \int^{t}_{0} \partial_{y}F(s, X(s), Y(s)) dY(s) \\ \nonumber 
&+& \int^{t}_{0} \partial^{2}_{x,y} F(s, X(s), Y(s)) (D^{-} Y)(s) u(s) ds \\ \nonumber
&+& \frac{1}{2} \int^{t}_{0} \partial^{2}_{x} F(s, X(s), Y(s)) u^{2}(s) ds,
\end{eqnarray} 
where $(D^{-}Y)(s) := \int^{T}_{s} D^{W}_{s} Y(r) dr$. 
\end{proposition}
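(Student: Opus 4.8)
The plan is to read the statement as an anticipating It\^o formula for a functional $F(t,x,y)$ whose last argument is fed the non-adapted, finite variation process $Y(t)=\int^{T}_{t}\theta(s)\,ds$, and to prove it by reducing to a dense class and then running a discretization argument whose only non-classical ingredient is the substitution rule for plugging an anticipating random variable into an It\^o integral. First I would note that, since $F$ and its first and second derivatives are bounded by $C$ along $(s,X(s),Y(s))$, every integral in the statement is well defined, the term $\int^{t}_{0}\partial_{x}F(s,X(s),Y(s))\,dX(s)$ being read, because the factor $Y(s)$ anticipates $W$, as $\int^{t}_{0}\partial_{x}F(s,X(s),Y(s))u(s)\,\delta W(s)+\int^{t}_{0}\partial_{x}F(s,X(s),Y(s))v(s)\,ds$ with $\delta$ the Skorohod integral. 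Then, by density together with a standard truncation, I would reduce to the case where $\theta$ is an elementary process with smooth coefficients (so that $Y(t)\in\mathbb{D}^{1,2}_{W}$ uniformly in $t$ and $D^{W}\theta\in L^{2}([0,T]^{2}\times\Omega)$), $u,v$ are bounded adapted and $F$ is twice continuously differentiable with bounded derivatives, recovering the general case at the end by passing to the limit with the closability of $D^{W}$, the continuity of $\delta^{W}$ on $\mathbb{L}^{1,2}$ and dominated convergence, all estimates being controlled by $C$.

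For the reduced case I would fix $t$ and a partition $0=t_{0}<\dots<t_{n}=t$ of mesh $|\pi|$, telescope $F(t,X(t),Y(t))-F(0,X(0),Y(0))=\sum_{i}\Delta_{i}$, and split each $\Delta_{i}$ into the variation of $Y$ (holding $s$ and $X$) and the variation of $(s,X)$ (freezing $Y$ at $Y(t_{i})$). The $Y$-part is classical: as $Y$ is absolutely continuous with $dY(s)=-\theta(s)\,ds$, its first-order contributions give $\int^{t}_{0}\partial_{y}F(s,X(s),Y(s))\,dY(s)$ and its quadratic ones are $O(|\pi|)$; moreover the mixed contribution $\sum_{i}\partial^{2}_{x,y}F\,\Delta X_{i}\,\Delta Y_{i}$ disappears, since $\Delta Y_{i}=O(\Delta t_{i})$ against $\Delta X_{i}$ of size $\sqrt{\Delta t_{i}}$ (here I would use $\delta^{W}(Fu)=F\,\delta^{W}(u)-\int D^{W}F\cdot u$ to confirm the anticipation of $\theta$ leaves nothing surviving in that term). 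For the $(s,X)$-part I would apply the classical It\^o formula to $s\mapsto F(s,X(s),y)$ on $[t_{i},t_{i+1}]$ for \emph{deterministic} $y$ and then substitute $y=Y(t_{i})$; the only delicate step is the stochastic integral, which I would handle with the substitution formula
\begin{eqnarray}
\nonumber \Bigl(\int_{t_{i}}^{t_{i+1}}h^{y}(s)\,dW(s)\Bigr)\Big|_{y=Y(t_{i})} &=& \int_{t_{i}}^{t_{i+1}}h^{Y(t_{i})}(s)\,\delta W(s)\\
\nonumber &+& \int_{t_{i}}^{t_{i+1}}\bigl(\partial_{y}h^{y}(s)\bigr)\big|_{y=Y(t_{i})}\,D^{W}_{s}Y(t_{i})\,ds,
\end{eqnarray}
valid for $Y(t_{i})\in\mathbb{D}^{1,2}_{W}$, with $h^{y}(s)=\partial_{x}F(s,X(s),y)u(s)$, which produces a Skorohod block together with a ``trace'' correction.

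Summing over $i$ and letting $|\pi|\to 0$, I expect the $\partial_{s}F$, $\frac{1}{2}\partial^{2}_{x}F\,u^{2}$ and $\partial_{x}F\,v$ pieces to reassemble the corresponding terms of the formula; the Skorohod blocks $\sum_{i}\int_{t_{i}}^{t_{i+1}}\partial_{x}F(s,X(s),Y(t_{i}))u(s)\,\delta W(s)$ to converge in $L^{2}(\Omega)$, by continuity of $\delta^{W}$, to $\int^{t}_{0}\partial_{x}F(s,X(s),Y(s))u(s)\,\delta W(s)$; and the trace corrections $\sum_{i}\int_{t_{i}}^{t_{i+1}}\partial^{2}_{x,y}F(s,X(s),Y(t_{i}))\,D^{W}_{s}Y(t_{i})\,u(s)\,ds$ to converge, since $D^{W}_{s}Y(t_{i})$ converges as $t_{i}\uparrow s$ to the diagonal (left) Malliavin derivative $(D^{-}Y)(s)$ and by dominated convergence, to $\int^{t}_{0}\partial^{2}_{x,y}F(s,X(s),Y(s))\,(D^{-}Y)(s)\,u(s)\,ds$. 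Collecting the limits yields the six-term identity, and then the regularity restrictions of the first step are removed.

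The hard part, and the only genuinely non-classical point, will be producing and correctly identifying the extra term $\int^{t}_{0}\partial^{2}_{x,y}F\,(D^{-}Y)\,u\,ds$: although $Y$ has finite variation and hence zero ordinary bracket with $X$, it anticipates $W$, and it is exactly this anticipation which, through the substitution formula, leaves a non-vanishing trace; one must justify carefully the interchange of the partition limit with the $D^{W}$ and $\delta^{W}$ operations while keeping uniform $L^{2}$ control, which is precisely where the hypotheses $\theta\in\mathbb{L}^{1,2}$, $u,v\in L^{2}_{a}([0,T]\times\Omega)$ and the boundedness of $F$ and its derivatives are used. Alternatively, since this statement is a particular case of the anticipating It\^o formula used in \cite{Alos06} and \cite{ALV}, one may simply invoke it from there.
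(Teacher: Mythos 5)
Your closing sentence is, in fact, the paper's entire proof: the authors simply write ``See \cite{Alos06}'' and import the anticipating It\^o formula from there, so invoking the reference is all that is expected at this point. The body of your proposal goes further and reconstructs the standard argument behind that cited result, and the reconstruction is sound in outline: you correctly read the $dX$ term as a Skorohod integral plus a Lebesgue part (the integrand $\partial_x F(s,X(s),Y(s))$ is not adapted because $Y$ anticipates $W$), you correctly locate the one non-classical ingredient --- the substitution formula $\bigl(\int h^y\,dW\bigr)\big|_{y=Y(t_i)}=\int h^{Y(t_i)}\,\delta W+\int \partial_y h^y\big|_{y=Y(t_i)}\,D^W_s Y(t_i)\,ds$ --- as the source of the trace term $\int_0^t\partial^2_{x,y}F\,(D^-Y)(s)\,u(s)\,ds$, and you correctly observe that the ordinary mixed bracket vanishes because $Y$ has finite variation. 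Two points deserve care if you were to write this out in full: first, $\delta^W$ is closed but not continuous on its whole domain, so the convergence of the Skorohod blocks must be run in the $\mathbb{L}^{1,2}$ norm (convergence of the integrands together with their Malliavin derivatives), which is exactly where the hypothesis $\theta\in\mathbb{L}^{1,2}$ enters --- you mention this, but it is the step most likely to hide a gap; second, the identification $D^W_sY(t_i)\to (D^-Y)(s)=\int_s^T D^W_s\theta(r)\,dr$ as $t_i\uparrow s$ needs an $L^2$ maximal-type bound on $r\mapsto D^W_s\theta(r)$ to pass the limit under the time integral. Neither issue is fatal, and both are handled in \cite{Alos06}; your sketch buys a self-contained understanding of where the $(D^-Y)$ correction comes from, whereas the paper's citation buys brevity.
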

\begin{proof}
See \cite{Alos06}.
\end{proof}

The next proposition is useful when we want to calculate the Malliavin derivative.
\begin{proposition}
Let $\sigma$ and $b$ be continuously differential functions on $\mathbb{R}$ with bounded derivatives. Consider the solution $X=\left\{X_{t}, t \in [0,T]\right\}$ of the stochastic differential equation
\begin{eqnarray}
\nonumber X(t) = x(0) + \int^{t}_{0} \sigma(X(s)) dW(s) + \int^{t}_{0} b(X(s))ds.
\end{eqnarray}
Then, we have
\begin{eqnarray}
\nonumber D_{s}X(t) = \sigma(X(s))\exp\left(\int^{t}_{s} \sigma'(X(s)) dW(s) + \int^{t}_{s} \lambda(s) ds\right){1\!\!1}_{[0,t]}(s).
\end{eqnarray}
where $\lambda(s)=[b' - \frac{1}{2}\left(\sigma'\right)^{2}](X(s))$.
\end{proposition}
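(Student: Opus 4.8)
The plan is to follow the classical route used in Malliavin calculus for SDEs (see \cite{N}): first show that $X(t)\in\mathbb{D}^{1,2}$ for every $t$, then differentiate the defining equation to obtain a \emph{linear} SDE satisfied by the Malliavin derivative, and finally solve that linear SDE explicitly.

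First I would set up the Picard iteration $X^{(0)}(t):=x(0)$ and
\begin{eqnarray}
\nonumber X^{(n+1)}(t) := x(0) + \int_0^t \sigma(X^{(n)}(s))\, dW(s) + \int_0^t b(X^{(n)}(s))\, ds .
\end{eqnarray}
Since $\sigma$ and $b$ have bounded derivatives they are globally Lipschitz, so this sequence converges in $L^2(\Omega)$, uniformly on $[0,T]$, to the solution $X$. By induction, each $X^{(n)}(t)$ belongs to $\mathbb{D}^{1,2}$: applying the chain rule for $D_s$ together with the commutation rule $D_s\big(\int_0^t u(r)\,dW(r)\big)= u(s)\,\mathbbm{1}_{[0,t]}(s) + \int_0^t D_s u(r)\,dW(r)$ gives a recursion for $D_s X^{(n+1)}(t)$ in terms of $D_s X^{(n)}$. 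Using $\|\sigma'\|_\infty,\|b'\|_\infty<\infty$, the It\^o isometry and Gronwall's lemma, one obtains a bound $\sup_n\sup_{t\le T}\mathbb{E}\big[\|D_\cdot X^{(n)}(t)\|_{L^2([0,T])}^2\big]<\infty$ that is uniform in $n$; since $\mathbb{D}^{1,2}$ is closed under the $\|\cdot\|_{1,2}$-norm, this yields $X(t)\in\mathbb{D}^{1,2}$ and weak convergence of the derivatives along a subsequence.

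Next I would take $D_s$ on both sides of the SDE. For $s\le t$, commuting $D_s$ with the two integrals and using the chain rule gives
\begin{eqnarray}
\nonumber && D_s X(t) = \sigma(X(s)) + \int_s^t \sigma'(X(r))\, D_s X(r)\, dW(r) + \int_s^t b'(X(r))\, D_s X(r)\, dr,
\end{eqnarray}
while for $s>t$ the $\mathcal{F}_t$-measurability of $X(t)$ forces $D_s X(t)=0$, which accounts for the factor $\mathbbm{1}_{[0,t]}(s)$. Thus, for fixed $s$, the process $Y(t):=D_s X(t)$, $t\ge s$, solves the linear SDE $dY(t)=\sigma'(X(t))Y(t)\,dW(t)+b'(X(t))Y(t)\,dt$ with initial condition $Y(s)=\sigma(X(s))$. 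Finally, by It\^o's formula one checks that
\begin{eqnarray}
\nonumber Z(t) &:=& \sigma(X(s))\exp\left(\int_s^t \sigma'(X(r))\, dW(r) + \int_s^t \left(b'(X(r)) - \tfrac{1}{2}\,(\sigma'(X(r)))^2\right) dr\right)
\end{eqnarray}
satisfies the same linear equation with the same initial value; by uniqueness for linear SDEs (again guaranteed by the Lipschitz/boundedness hypotheses) $Y=Z$, which is exactly the asserted formula with $\lambda=b'-\tfrac12(\sigma')^2$ evaluated along $X$.

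The main obstacle is the first step: rigorously establishing $X(t)\in\mathbb{D}^{1,2}$ and justifying that $D_s$ commutes with the stochastic and Lebesgue integrals, i.e.\ controlling the Picard iterates in the $\|\cdot\|_{1,2}$-norm uniformly in $n$ and passing to the limit. Once membership in $\mathbb{D}^{1,2}$ and the chain rule are available, writing down and solving the linear SDE for the derivative is routine. (In fact the stated assumptions yield $X(t)\in\mathbb{D}^{1,p}$ for every $p\ge2$, which is convenient but not needed here.)
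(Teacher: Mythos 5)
Your proposal is correct and follows exactly the argument the paper relies on: the paper's ``proof'' is just a citation to Nualart, Section 2.2, where the result is established precisely by controlling the Picard iterates in the $\|\cdot\|_{1,2}$-norm, differentiating the SDE to obtain the linear equation for $D_sX(t)$, and solving it with the stochastic exponential. Your reconstruction, including the identification $\lambda = b' - \tfrac12(\sigma')^2$ evaluated along $X$, matches that standard route.
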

\begin{proof}
See \cite{N}, Section 2.2.
\end{proof}

\section{Decomposition formula using Malliavin calculus.}

\hspace{0.4cm}In this section, we use the Malliavin calculus to extend the call option price decomposition in an anticipative framework. This time, the decomposition formula has one term less than in the Itô formula's setup.\\

\par

We recall the definition of the future average volatility as
		\begin{eqnarray}
		\nonumber \bar{\sigma}(t):=\sqrt{\frac{1}{T-t} \int^{T}_{t} \sigma^{2}(s) ds}.
		\end{eqnarray}

\begin{theorem} \label{Teo-Malliavin}
(Decomposition formula) For all $t\in[0,T)$, we have 
\begin{eqnarray}
\nonumber	V(t)&=& \mathbb{E}_{t}\left[BS(t,S(t),\bar{\sigma}(t))\right]\\ \nonumber
&+& \frac{1}{2}\mathbb{E}_{t}\left[\int^{T}_{t} e^{-ru} G(u,S(u), \bar{\sigma}(u)) \left(L^{2}(u,S(u),\sigma(u))- \sigma^2(u)\right)du\right]\\ \nonumber
&+&\frac{\rho}{2}\mathbb{E}_{t}\left[\int^{T}_{t} e^{-r(u-t)}L(u,S(u),\sigma(u)) H(u,S(u), \bar{\sigma}_u) \left(\int^{T}_{u}D^{W}_{u} \sigma^{2}(r) dr\right)du\right].
\end{eqnarray}
where

$$G(t, S(t), \sigma(t)):= S^{2}(t) \partial^{2}_{S} BS (t, S(t), \sigma(t)),$$

$$H(t, S(t), \sigma(t)):= S(t)\partial_{S}G(t, S(t), \sigma(t)),$$ 
and 

$$L(t,S(t), \sigma(t)):=\frac{\theta(t,S(t),\sigma(t))}{S(t)} .$$
\end{theorem}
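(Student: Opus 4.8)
The plan is to run the argument of Theorem~\ref{Teo-Ito} with the adapted projection $v(t)$ replaced by the genuinely anticipative average volatility $\bar\sigma(t)$ and the classical It\^o formula replaced by the anticipating It\^o formula of Proposition~\ref{MallaivinIto}. Since the time to maturity is zero at $T$, $BS(T,S(T),\bar\sigma(T))=(S(T)-K)^{+}$, so the martingale property of $e^{-rt}V(t)$ gives $e^{-rt}V(t)=\mathbb{E}_{t}\bigl[e^{-rT}BS(T,S(T),\bar\sigma(T))\bigr]$. I would then apply the anticipating It\^o formula on $[t,T]$ to $F(u,S(u),Y(u)):=e^{-ru}BS(u,S(u),\bar\sigma(u))$, where $Y(u):=\int_{u}^{T}\sigma^{2}(s)\,ds$, so that $\bar\sigma(u)=\sqrt{Y(u)/(T-u)}$, $dY(u)=-\sigma^{2}(u)\,du$, and $F$ is a smooth function of $(u,S(u),Y(u))$. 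Two preparatory points, exactly as in Theorem~\ref{Teo-Ito}: one needs the mollifier/localization device to cope with the unbounded Black--Scholes Greeks, and, since \eqref{general_model} carries the extra independent Brownian motion $B$, the evident two-Brownian-motion extension of Proposition~\ref{MallaivinIto}. As $\sigma$, hence $\bar\sigma$ and $Y$, is adapted to the filtration of $W$, the $B$-part contributes only ordinary It\^o terms (it enters the quadratic variation of $S$ but produces no anticipating cross term), and $Y$ has finite variation --- which is why, unlike Theorem~\ref{Teo-Ito}, no $d[M,M]$-type term appears.

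Next I would collect the bounded-variation (``$du$'') terms. Writing $F$ through the variance, the vega--gamma relation \eqref{rel-vega} gives $\partial_{y}F=\tfrac12 e^{-ru}S^{2}\partial_{S}^{2}BS$, so the $dY$-term is $-\tfrac12 e^{-ru}\sigma^{2}(u)S^{2}\partial_{S}^{2}BS\,du$, while the $d[S,S]$-term is $\tfrac12 e^{-ru}L^{2}(u,S(u),\sigma(u))S^{2}\partial_{S}^{2}BS\,du$. Together with $\partial_{u}F$, the drift $\mu\,\partial_{S}F$ of $dS$ and the discount term, these reorganize so that $\partial_{t}BS+\mu\,\partial_{S}BS-rBS+\tfrac12\bar\sigma^{2}(u)S^{2}\partial_{S}^{2}BS=0$ by the Black--Scholes PDE, i.e.\ Feynman--Kac \eqref{FK} at volatility $\bar\sigma(u)$ --- the extra $\tfrac12\bar\sigma^{2}(u)S^{2}\partial_{S}^{2}BS$ being exactly what $\partial_{u}$ produces when it hits the variance clock $Y(u)/(T-u)$. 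Hence the surviving bounded-variation contribution is $\tfrac12 e^{-ru}\bigl(L^{2}(u,S(u),\sigma(u))-\sigma^{2}(u)\bigr)G(u,S(u),\bar\sigma(u))$, which after the final multiplication by $e^{rt}$ becomes the second line of the statement.

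The anticipating cross term $\partial_{S,y}^{2}F\,(D^{-}Y)(u)\,u_{W}(u)\,du$ of Proposition~\ref{MallaivinIto} gives the third summand: differentiating $\partial_{y}F$ in $S$ gives $\partial_{S,y}^{2}F=\tfrac12 e^{-ru}S^{-1}H(u,S(u),\bar\sigma(u))$ (using $H=S\,\partial_{S}G$), the $dW$-coefficient of $S$ is $u_{W}(u)=\rho\,\theta(u,S(u),\sigma(u))=\rho\,S(u)\,L(u,S(u),\sigma(u))$, and $(D^{-}Y)(u)=\int_{u}^{T}D^{W}_{u}\sigma^{2}(r)\,dr$. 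The product equals $\tfrac{\rho}{2}e^{-ru}L(u,S(u),\sigma(u))\,H(u,S(u),\bar\sigma(u))\int_{u}^{T}D^{W}_{u}\sigma^{2}(r)\,dr$, which after $\times e^{rt}$ is the last line. Finally, the purely stochastic terms --- the Skorohod integral against $dW$ and the ($\mathcal{F}^{W}$-conditionally It\^o) integral against $dB$, both with integrands anticipating through $\bar\sigma$ --- have zero $\mathcal{F}_{t}$-conditional expectation; and $F(t,S(t),Y(t))$ is not $\mathcal{F}_{t}$-measurable, which is why, after taking $\mathbb{E}_{t}$, the leading term appears as $\mathbb{E}_{t}[BS(t,S(t),\bar\sigma(t))]$ rather than $BS(t,S(t),\bar\sigma(t))$.

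I expect the main obstacle not to be the drift bookkeeping (which collapses through Feynman--Kac essentially as in Theorem~\ref{Teo-Ito}) but the analytic justification of the anticipating It\^o formula here: simultaneously carrying the localization of the unbounded Greeks, the two-Brownian-motion extension of Proposition~\ref{MallaivinIto}, the integrability of $D^{W}\sigma^{2}$ needed along the path, and the vanishing of the $\mathcal{F}_{t}$-conditional expectation of the Skorohod integral over $[t,T]$.
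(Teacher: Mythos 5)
Your proposal is correct and follows essentially the same route as the paper: apply the anticipating It\^o formula of Proposition~\ref{MallaivinIto} (with the mollifier localization) to $e^{-ru}BS(u,S(u),\bar\sigma(u))$, absorb the drift terms via the Feynman--Kac operator \eqref{FK} at volatility $\bar\sigma(u)$ together with the vega--gamma relation \eqref{rel-vega}, identify the anticipating cross term as the $L\cdot H\cdot\int_u^T D^W_u\sigma^2(r)\,dr$ contribution, and take conditional expectations. Your bookkeeping of why no $d[M,M]$-type term survives (finite variation of $\int_u^T\sigma^2(s)\,ds$) and why the leading term is $\mathbb{E}_t[BS(t,S(t),\bar\sigma(t))]$ rather than $BS(t,S(t),\bar\sigma(t))$ is in fact more explicit than the paper's own write-up.
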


\begin{proof}
Notice that $e^{-rT} BS(T,S(T),\bar{\sigma}(T))=e^{-rT} V_{T}$. As $e^{-rt}V(t)$ is a martingale we can write
\begin{eqnarray}
\nonumber e^{-rt}V(t) = \mathbb{E}_{t}\left(e^{-rT}V(T)\right)= \mathbb{E}_{t}\left(e^{-rT}BS(T, S(T), \bar{\sigma}(T))\right).
\end{eqnarray}
So, using the aproximation to the identity argument and then applying the Itô formula presented in Proposition \ref{MallaivinIto} to $$e^{-rt}BS(t,S(t), \bar{\sigma}(t)).$$ We deduce and using \eqref{rel-vega} and \eqref{FK} that 
\begin{eqnarray}	
\nonumber && e^{-rT}BS(T,S(T),\bar{\sigma}(T))  - e^{-rt}BS(t,S(t),\bar{\sigma}(t))=\\ \nonumber
&=& \int^{T}_{t} e^{-ru} \mathcal{L}_{\bar{\sigma}S}BS(u,S(u),\bar{\sigma}(u)) du \\ \nonumber
&+& \frac{1}{2}\int^{T}_{t} e^{-ru} S^{2}(u) \partial^{2}_{S}BS(u,S(u),\bar{\sigma}(u)) \left(\frac{\theta(u,S(u),\sigma(u))}{S(u)}\right)^{2}du\\ \nonumber
&-& \frac{1}{2}\int^{T}_{t} e^{-ru} S^{2}(u) \partial^{2}_{S}BS(u,S(u),\bar{\sigma}(u))  \sigma^{2}(u)du\\ \nonumber
&+&\int^{T}_{t} e^{-ru} \partial_{S}BS(u,S(u),\bar{\sigma}_{u}) \theta(u,S(u), \sigma(u)) \left(\rho dW(u)  + \sqrt{1-\rho^{2}}dB(u) \right)\\ \nonumber
&+&\frac{\rho}{2}\int^{T}_{t} e^{-ru} \theta(u,S(u), \sigma(u))\partial_{S}\left(S^{2}(u)\partial^{2}_{S}BS(u,S(u),\bar{\sigma}(u))\right) \left(\int^{T}_{u}D^{W}_{u} \sigma^{2}(r) dr\right) du.	
\end{eqnarray}

Taking conditional expectation and multiplying by $e^{rt}$, we have
\begin{eqnarray}	
\nonumber && \mathbb{E}_{t}[e^{-r(T-t)}BS(T,S(T),\bar{\sigma}(T))] = \mathbb{E}_{t}\left[BS(t,S(t),\bar{\sigma}(t))\right]\\ \nonumber
&+& \frac{1}{2}\mathbb{E}_{t}\left[\int^{T}_{t} e^{-r(u-t)}  G(u,S(u), \bar{\sigma}(u)) \left(L^{2}(u,S(u),\sigma(u))- \sigma^{2}(u)\right)du\right]\\ \nonumber
&+&\frac{\rho}{2}\mathbb{E}_{t}\left[\int^{T}_{t} e^{-r(u-t)}L(u,S(u),\sigma(u)) H(u,S(u), \bar{\sigma}(u)) \left(\int^{T}_{u}D^{W}_{u} \sigma^{2}(r)dr\right) du\right].	
\end{eqnarray}
\end{proof}

\begin{Rem}
As it is expected, a new term emerges when it is considered \eqref{general_model} like it happen in Theorem \ref{Teo-Ito}.
\end{Rem}

\begin{Rem}
In particular, when $\theta(t,S(T), \sigma(t)) = \sigma(t) S(t)$
\begin{eqnarray}
\nonumber	V(t)&=& BS(t,S_{t},\bar{\sigma}(t))\\ \nonumber
&+&\frac{\rho}{2}\mathbb{E}_{t}\left[\int^{T}_{t} e^{-r(u-t)} \sigma(u)H(u,S(u), \bar{\sigma}_u)\left(\int^{T}_{u}D^{W}_{u} \sigma^{2}(r) dr\right) du\right].
\end{eqnarray}
Also, the gamma effect is cancelled as we have seen in the Itô formula section.
\end{Rem}

\begin{Rem}
Note that when $v(t)$ is a deterministic function, we have that all decomposition formulas are equal.
\end{Rem}

\begin{Rem}\label{Correl}
When $\rho=0$, we have
\begin{eqnarray}
\nonumber	&& \mathbb{E}_{t}\left[BS(t,S(t),\bar{\sigma}(t)) - BS(t,S_{t},v(t))\right]\\ \nonumber
&=&\frac{1}{2}\mathbb{E}_{t}\left[\int^{T}_{t} e^{-r(u-t)} \left(G(u,S(u), \bar{\sigma}(u)) - G(u, S(u), v(u))\right) L^{2}(u,S(u),\sigma(u))du\right]\\ \nonumber
&-&\frac{1}{2}\mathbb{E}_{t}\left[\int^{T}_{t} e^{-r(u-t)} \left(G(u,S(u), \bar{\sigma}(u)) - G(u, S(u), v(u))\right) \sigma^{2}(u)du\right]\\ \nonumber
&-&\frac{1}{8}\mathbb{E}_{t}\left[\int^{T}_{t} e^{-r(u-t)} K(u, S(u), v(u)) d\left[M,M\right](u)\right].
\end{eqnarray}
In particular, when $\theta(t,S(t), \sigma(t)) = \sigma(t)S(t)$ :
\begin{eqnarray*}
\nonumber	&& \mathbb{E}_{t}\left[BS(t,S(t),\bar{\sigma}(t)) - BS(t,S_{t},v(t))\right]\\
&=&-\frac{1}{8}\mathbb{E}_{t}\left[\int^{T}_{t} e^{-r(u-t)} K(u, S(u), v(u)) d\left[M,M\right](u)\right].
\end{eqnarray*}
The difference between the two appoaches is given by the vol vol of the option.
\end{Rem}

\section{An expression for the derivative of the implied volatility.}

\hspace{0.4cm}In this section, we give a general expression for the derivative of the implied volatility under the framework of Itô calculus and Malliavin calculus. There exist a previous calculation of this derivative in the case of exponential models using Malliavin calculus in \cite{ALV}.\\

\par

Let $I(S(t))$ denote the implied volatility process, which satisfies by definition $V(t)=BS(t, S(t), I(S(t)))$. We calculate the derivative of the implied volatility in the standard Itô case.
\begin{proposition} 
Under \eqref{general_model}, for every fixed $t\in[0,T)$ and assuming that  $(v(t))^{-1} < \infty$ a.s., we have that
\begin{eqnarray}
\nonumber \partial_{S}I(S^{*}(t)) &=& \frac{\mathbb{E}_{t}\left[\int^{T}_{t}\partial_{S} F_{2}(u, S^{*}(u), v(u))du\right]}{\partial_{\sigma} BS(t, S^{*}(t), I(S^{*}(t)))} \\ \nonumber
  &-& \frac{\mathbb{E}_{t}\left[\int^{T}_{t}\left(F_{1}(u, S^{*}(u), v(u)) + \partial_{S} F_{3}(u, S^{*}(u), v(u))\right) du\right]}{2S\partial_{\sigma} BS(t, S^{*}(t), I(S^{*}(t)))}.
\end{eqnarray}
where
\begin{eqnarray}
\nonumber && \mathbb{E}_{t}\left[\int^{T}_{t}F_{1}(u, S(u), v(u))du\right]\\ \nonumber
&=& \frac{1}{2}\mathbb{E}_{t}\left[\int^{T}_{t} e^{-r(u-t)} G(u, S(u), v(u)) \left(L^{2}(u,S(u),\sigma(u))- \sigma^{2}(u)\right)du\right]\\ \nonumber
&+&\frac{1}{8}\mathbb{E}_{t}\left[\int^{T}_{t} e^{-r(u-t)} K(u, S(u), v(u)) d\left[M,M\right](u)\right]\\ \nonumber
&+&\frac{\rho}{2}\mathbb{E}_{t}\left[\int^{T}_{t} e^{-r(u-t)} \frac{\theta(u,S(u), \sigma(u))}{S(u)} H(u, S(u), v(u)) d\left[W,M\right](u)\right],
\end{eqnarray}
\begin{eqnarray}
\nonumber &&\mathbb{E}_{t}\left[\int^{T}_{t}F_{2}(u, S(u), v(u))du\right]\\ \nonumber &=& \frac{\rho}{2}\mathbb{E}_{t}\left[\int^{T}_{t} e^{-r(u-t)} \frac{\theta(u,S(u), \sigma(u))}{S(u)} H(u, S(u), v(u)) d\left[W,M\right](u)\right]
\end{eqnarray}
and
\begin{eqnarray}
\nonumber &&\mathbb{E}_{t}\left[\int^{T}_{t}F_{3}(u, S(u), v(u))du\right]\\ 
\nonumber&=& \frac{1}{2}\mathbb{E}_{t}\left[\int^{T}_{t} e^{-r(u-t)} G(u, S(u), v(u)) \left(L^{2}(u,S(u),\sigma(u))- \sigma^{2}(u)\right)du\right]\\ \nonumber
&+&\frac{1}{8}\mathbb{E}_{t}\left[\int^{T}_{t} e^{-r(u-t)} K(u, S(u), v(u)) d\left[M,M\right](u)\right].
\end{eqnarray}
\end{proposition}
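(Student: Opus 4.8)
The plan is to start from the decomposition formula of Theorem~\ref{Teo-Ito}, rewritten in the compact form
\[
V(t)=BS(t,S(t),v(t))+\mathbb{E}_{t}\Big[\int_{t}^{T}F_{1}(u,S(u),v(u))\,du\Big],
\]
where $F_{1}$ collects the three correction terms as in the statement (the stochastic integrals $d[M,M]$ and $d[W,M]$ being absorbed into a $du$-integral after taking conditional expectation, or more carefully after writing the quadratic covariations through their adapted densities). By definition of the implied volatility $I(S(t))$ we also have $V(t)=BS(t,S(t),I(S(t)))$. First I would equate these two expressions, giving
\[
BS(t,S(t),I(S(t)))=BS(t,S(t),v(t))+\mathbb{E}_{t}\Big[\int_{t}^{T}F_{1}(u,S(u),v(u))\,du\Big].
\]
The idea is then to differentiate this identity with respect to the current spot level $S$, evaluated at the relevant point $S^{*}(t)$, and solve for $\partial_{S}I$.

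Second, I would carry out the differentiation on the left-hand side by the chain rule:
\[
\partial_{S}BS(t,S,I(S))\big|_{S=S^{*}}=\partial_{S}BS(t,S^{*},I(S^{*}))+\partial_{\sigma}BS(t,S^{*},I(S^{*}))\,\partial_{S}I(S^{*}).
\]
On the right-hand side, differentiating $BS(t,S,v(t))$ in $S$ gives $\partial_{S}BS(t,S^{*},v(t))$ (here one uses that $v(t)$ does not depend on the spot, or that its spot-dependence is being held in the frozen term — this is exactly the kind of point that needs care), and differentiating the expectation term produces the $\partial_{S}F_{1}$ contribution together with, by the structure of $F_{1}$ (its terms contain explicit factors $S^{2}\partial_{S}^{2}BS$, $S\partial_{S}(\cdots)$ etc., i.e. they are naturally $S\cdot(\text{something})$), an extra $\tfrac1S$-type term. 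Matching the $\partial_{S}BS$ pieces on both sides and isolating $\partial_{S}I(S^{*})$ yields
\[
\partial_{S}I(S^{*})=\frac{\mathbb{E}_{t}\big[\int_{t}^{T}\partial_{S}F_{1}(u,S^{*}(u),v(u))\,du\big]-(\text{boundary/spot terms})}{\partial_{\sigma}BS(t,S^{*}(t),I(S^{*}(t)))},
\]
and the claimed splitting of $F_{1}$ into $F_{1}$, $F_{2}$, $F_{3}$ is just a bookkeeping device to separate the part of the numerator that differentiates through the $G,H,K$ kernels (giving $\partial_{S}F_{2}$ and $\partial_{S}F_{3}$) from the part that contributes the lower-order $\tfrac{1}{2S}$ correction ($F_{1}$ itself). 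I would identify $F_{2}$ as the $d[W,M]$ (vanna) term and $F_{3}$ as the gamma and vol-of-vol terms, and check that the decomposition $F_{1}=F_{2}+F_{3}$ plus the pointwise differentiation of each kernel reproduces exactly the two quotients in the statement.

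The main obstacle will be making the differentiation in $S$ rigorous and keeping track of all the terms that arise when $\partial_{S}$ hits the composite expressions $G(u,S(u),v(u))=S^{2}\partial_{S}^{2}BS$, $H=S\partial_{S}G$, $K=S^{2}\partial_{S}^{2}G$ inside the conditional expectation: one must interchange $\partial_{S}$ with $\mathbb{E}_{t}$ and with $\int_{t}^{T}$, which requires the same mollification/dominated-convergence argument already invoked in Theorem~\ref{Teo-Ito} (the Black--Scholes Greeks are unbounded near $S=0$ and near maturity), and one must be careful about whether $v(u)$, $\sigma(u)$ and the quadratic covariations depend on the spot through the dynamics of $S$ — which is why the statement freezes everything at $S^{*}$ and works ``for every fixed $t$''. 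Once the interchange is justified and the algebra of regrouping the $\partial_{\sigma}BS=S^{2}\sigma\tau\,\partial_{S}^{2}BS$ relation \eqref{rel-vega} is used to clear denominators, the identity follows; I expect the only genuinely delicate estimate to be the integrability needed to differentiate under the expectation, everything else being a direct computation.
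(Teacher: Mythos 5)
Your starting point---differentiating the identity $BS(t,S,I(S)) = BS(t,S,v(t)) + \mathbb{E}_t\bigl[\int_t^T F_1\,du\bigr]$ in $S$ and solving for $\partial_S I$---is the same as the paper's, and up to the intermediate expression
\[
\partial_S I(S^*) \;=\; \frac{\partial_S BS(t,S^*,v(t)) - \partial_S BS(t,S^*,I(S^*))}{\partial_\sigma BS(t,S^*,I(S^*))} \;+\; \frac{\mathbb{E}_t\bigl[\int_t^T \partial_S F_1(u,S^*(u),v(u))\,du\bigr]}{\partial_\sigma BS(t,S^*,I(S^*))}
\]
your argument agrees with the proof. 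The gap is in what you dismiss as ``bookkeeping'': the difference $\partial_S BS(t,S^*,v(t)) - \partial_S BS(t,S^*,I(S^*))$ (your ``boundary/spot terms'') cannot be evaluated by regrouping kernels or clearing denominators with \eqref{rel-vega}. The paper closes it with two specific ingredients that are absent from your proposal. First, the Renault--Touzi result \cite{RT} that the at-the-money-forward implied volatility $I^0$ of the \emph{uncorrelated} ($\rho=0$) model satisfies $\partial_S I^0 = 0$; since $F_3$ is precisely the part of $F_1$ that survives when $\rho=0$ and $F_2$ is the $\rho$-dependent vanna part (so $F_1=F_2+F_3$), this lets one trade $\partial_S BS(t,S^*,v)$ for $\partial_S BS(t,S^*,I^0) - \mathbb{E}_t\bigl[\int_t^T\partial_S F_3\,du\bigr]$, which is exactly why the first quotient of the statement contains $\partial_S F_2$ rather than $\partial_S F_1$. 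Second, the at-the-money-forward identity $\partial_S BS(t,S^*,\sigma) = \bigl(BS(t,S^*,\sigma)+S\bigr)/(2S)$, valid because $d_\pm = \pm\sigma\sqrt{\tau}/2$ when $S^*=Ke^{r\tau}$; this converts the remaining difference of deltas into a difference of option prices and is the sole source of the $\tfrac{1}{2S}$ factor---it does not come from the explicit powers of $S$ inside $G$, $H$, $K$ as you suggest. Consequently your criterion for splitting $F_1$ into $F_2$ and $F_3$ (``which kernels get differentiated'') is the wrong one: the split is by correlation, and without the two identities above the stated formula does not follow from your outline.
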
 
\begin{proof}
Taking partial derivatives with respect to $S(t)$ on the expression $V(t)=BS(t, S(t), I(S(T)))$, we obtain
\begin{eqnarray}\label{implied1}
\partial_{S}V(t)= \partial_{S} BS(t, S(t), I(S(T))) + \partial_{\sigma} BS(t, S(t), I(S(T))) \partial_{S} I(S(t)).
\end{eqnarray}
On the other hand, from Theorem \ref{Teo-Ito} we deduce that
\begin{eqnarray}
V(t)= BS(t,S(t),v(t)) + \mathbb{E}_{t}\left[\int^{T}_{t}F_{1}(u, S(u), v(u))du\right],
\end{eqnarray}
which implies that
\begin{eqnarray}\label{implied2}
\partial_{S} V(t)= \partial_{S}BS(t,S(t),v(t)) + \mathbb{E}_{t}\left[\int^{T}_{t}\partial_{S} F_{1}(u, S(u), v(u))du\right].
\end{eqnarray}
Using that $(v(t))^{-1} < \infty$ we can check that $\partial_{S} V(t)$ is well define and finite a.s. Thus, using that $S^{*}(t)=K\exp(r(T-t))$, (\ref{implied1}) and (\ref{implied2}), we obtain
\begin{eqnarray}
\nonumber \partial_{S} I(S^{*}(t)) &=& \frac{\partial_{S}BS(t,S^{*}(t),v(t)) -\partial_{S} BS(t, S^{*}(t), I(S(t))) }{\partial_{\sigma} BS(t, S^{*}(t), I(S(t)))}\\ \nonumber
&+& \frac{\mathbb{E}_{t}\left[\int^{T}_{t}\partial_{S} F_{1}(u, S^{*}(u), v(u))du\right]}{\partial_{\sigma} BS(t, S^{*}(t), I(S(t)))}.
\end{eqnarray}
From \cite{RT} we know that $\partial_{S} I^{0}(t) =0$, where  $I^{0}(t)$ is the implied volatility in the case $\rho=0$, so
\begin{eqnarray}
\nonumber \partial_{S}BS(t,S^{*}(t),v(t)) &=& \partial_{S} BS(t, S^{*}(t), I^{0}(S(t)))-\mathbb{E}_{t}\left[\int^{T}_{t}\partial_{S} F_{3}(u, S^{*}(u), v(u))du\right].
\end{eqnarray}
So, we have that
\begin{eqnarray}
\nonumber \partial_{S} I(S^{*}(t)) &=& \frac{\partial_{S}BS(t,S^{*}(t),I^{0}(t)) -\partial_{S} BS(t, S^{*}(t), I(S^{*}(t))) }{\partial_{\sigma} BS(t, S^{*}(t), I(S^{*}(t)))}\\ \nonumber
&+& \frac{\mathbb{E}_{t}\left[\int^{T}_{t}\partial_{S} F_{2}(u, S^{*}(u), v(u))du\right]}{\partial_{\sigma} BS(t, S^{*}(t), I(S^{*}(t)))}.
\end{eqnarray}
On the other hand, we have that
\begin{eqnarray}
\nonumber \partial_{S} BS(t, S^{*}(t), v(t)) = \phi(d)
\end{eqnarray}
and 
\begin{eqnarray}
\nonumber BS(t, S^{*}(t), v(t)) = S\left(\phi(d) - \phi(-d)\right)
\end{eqnarray}
where $\phi$ is the standard Gaussian density.
Then
\begin{eqnarray}
\nonumber \partial_{S}BS(t, S^{*}(t), v(t)) = \frac{BS(t, S^{*}(t), v(t)) + S}{2S}
\end{eqnarray}
and
\begin{eqnarray}
\nonumber && \partial_{S}BS(t,S^{*}(t),I^{0}(t)) -\partial_{S} BS(t, S^{*}(t), I(S^{*}(t)))\\ \nonumber
&=& \frac{1}{2S}\left(BS(t,S^{*}(t),I^{0}(t)) -BS(t, S^{*}(t), I(S^{*}(t)))\right)\\ \nonumber
&=& -\frac{1}{2S} \mathbb{E}_{t}\left[\int^{T}_{t}\left(F_{1}(u, S^{*}(u), v(u)) + \partial_{S} F_{3}(u, S^{*}(u), v(u))\right) du\right].
\end{eqnarray}
\end{proof}

Now, we derive the implied volatility using the Malliavin calculus. It has been proved in \cite{ALV}, in the case when $\theta(t,S(T), \sigma(t)) = \sigma(t) S(t)$.

\begin{proposition} 
Under \eqref{general_model}, for every fixed $t\in[0,T)$, assuming that $(\tilde{\sigma}(t))^{-1} < \infty$ a.s. Then we have that
\begin{eqnarray}
\nonumber \partial_{S}I(S^{*}(t)) &=& \frac{\mathbb{E}_{t}\left[\int^{T}_{t}\partial_{S} F_{2}(u, S^{*}(u), \bar{\sigma}(u))du\right] }{\partial_{\sigma} BS(t, S^{*}(t), I(S^{*}(t)))}\\ \nonumber
 &-&\frac{\mathbb{E}_{t}\left[\int^{T}_{t}\left(F_{1}(u, S^{*}(u), \bar{\sigma}(u)) + \partial_{S} F_{3}(u, S^{*}(u),\bar{\sigma}(u)) \right)du\right]}{2S\partial_{\sigma} BS(t, S^{*}(t), I(S^{*}(t)))}.
\end{eqnarray}
where
\begin{eqnarray}
\nonumber &&\mathbb{E}_{t}\left[\int^{T}_{t}F_{1}(u, S(u), \bar{\sigma}(u))du\right]\\ \nonumber
&=& \frac{1}{2}\mathbb{E}_{t}\left[\int^{T}_{t} e^{-r(u-t)} G(u, S(u), \bar{\sigma}(u)) \left(L^{2}(u,S(u),\sigma(u))- \sigma^{2}(u)\right)du\right]\\ \nonumber
&+&\frac{\rho}{2}\mathbb{E}_{t}\left[\int^{T}_{t} e^{-r(u-t)} L(u,S(u),\sigma(u)) H(u, S(u), \bar{\sigma}(u)) \left(\int^{T}_{u}D^{W}_{u} \sigma^{2}(r) dr\right)d\left[W,M\right](u)\right],
\end{eqnarray}
\begin{eqnarray}
\nonumber &&\mathbb{E}_{t}\left[\int^{T}_{t}F_{2}(u, S(u), \bar{\sigma}(u))du\right]\\ \nonumber &=& \frac{\rho}{2}\mathbb{E}_{t}\left[\int^{T}_{t} e^{-r(u-t)}L^{2}(u,S(u),\sigma(u)) \left(\int^{T}_{u}D^{W}_{u} \sigma^{2}(r) dr\right)d\left[W,M\right](u)\right]
\end{eqnarray}
and
\begin{eqnarray}
\nonumber &&\mathbb{E}_{t}\left[\int^{T}_{t}F_{3}(u, S(u), \bar{\sigma}(u))du\right]\\ \nonumber
&=& \frac{1}{2}\mathbb{E}_{t}\left[\int^{T}_{t} e^{-r(u-t)} G(u, S(u), \bar{\sigma}(u)) \left(L^{2}(u,S(u),\sigma(u))- \sigma^{2}(u)\right)du\right].
\end{eqnarray}  
\end{proposition}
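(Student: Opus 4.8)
The plan is to follow, with the obvious modifications, the argument used for the previous proposition, now starting from the anticipative decomposition of Theorem \ref{Teo-Malliavin} instead of Theorem \ref{Teo-Ito} and with $\bar{\sigma}(t)$ playing the role that $v(t)$ played there. First I would differentiate the defining identity $V(t)=BS(t,S(t),I(S(t)))$ in the current spot, obtaining
\[
\partial_{S}V(t)=\partial_{S}BS(t,S(t),I(S(t)))+\partial_{\sigma}BS(t,S(t),I(S(t)))\,\partial_{S}I(S(t)),
\]
the exact analogue of \eqref{implied1}. On the other hand Theorem \ref{Teo-Malliavin} reads $V(t)=\mathbb{E}_{t}[BS(t,S(t),\bar{\sigma}(t))]+\mathbb{E}_{t}[\int_{t}^{T}F_{1}(u,S(u),\bar{\sigma}(u))\,du]$, and I would differentiate this representation in $S(t)$ as well; since in the models considered (Heston, SABR, CEV, \dots) the volatility process $\sigma$, hence $\bar{\sigma}$, and the Malliavin weights $\int_{u}^{T}D^{W}_{u}\sigma^{2}(r)\,dr$ do not depend on the spot level, $\partial_{S}$ passes through $\mathbb{E}_{t}[\,\cdot\,]$ and through the time integral and acts only on the spot arguments inside $BS$, $G$, $H$ and $L$, which gives the analogue of \eqref{implied2}.

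Equating the two expressions for $\partial_{S}V(t)$ and evaluating at the at-the-money forward spot $S^{*}(t)=K\exp(r(T-t))$ --- where the hypothesis $(\bar{\sigma}(t))^{-1}<\infty$ a.s.\ makes the vega $\partial_{\sigma}BS(t,S^{*}(t),I(S^{*}(t)))$ strictly positive, so that $\partial_{S}V(t)$ and the quotient below are well defined and finite a.s.\ --- one solves for the derivative of the implied volatility:
\[
\partial_{S}I(S^{*}(t))=\frac{\partial_{S}\mathbb{E}_{t}[BS(t,S^{*}(t),\bar{\sigma}(t))]-\partial_{S}BS(t,S^{*}(t),I(S^{*}(t)))}{\partial_{\sigma}BS(t,S^{*}(t),I(S^{*}(t)))}+\frac{\mathbb{E}_{t}[\int_{t}^{T}\partial_{S}F_{1}(u,S^{*}(u),\bar{\sigma}(u))\,du]}{\partial_{\sigma}BS(t,S^{*}(t),I(S^{*}(t)))}.
\]

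Next I would use the symmetry result of \cite{RT}: when $\rho=0$ the implied volatility $I^{0}$ is flat at the money, $\partial_{S}I^{0}(S^{*}(t))=0$. Since for $\rho=0$ Theorem \ref{Teo-Malliavin} becomes $V^{0}(t)=\mathbb{E}_{t}[BS(t,S(t),\bar{\sigma}(t))]+\mathbb{E}_{t}[\int_{t}^{T}F_{3}(u,S(u),\bar{\sigma}(u))\,du]$, the price in the uncorrelated case, which also equals $BS(t,S(t),I^{0}(S(t)))$, differentiating this in $S(t)$ and using $\partial_{S}I^{0}=0$ at $S^{*}(t)$ yields
\[
\partial_{S}\mathbb{E}_{t}[BS(t,S^{*}(t),\bar{\sigma}(t))]=\partial_{S}BS(t,S^{*}(t),I^{0}(t))-\mathbb{E}_{t}\Big[\int_{t}^{T}\partial_{S}F_{3}(u,S^{*}(u),\bar{\sigma}(u))\,du\Big].
\]
Feeding this back, the first part of the numerator becomes $\partial_{S}BS(t,S^{*}(t),I^{0}(t))-\partial_{S}BS(t,S^{*}(t),I(S^{*}(t)))$ and the $F_{1}$ and $F_{3}$ contributions combine into the $\partial_{S}F_{2}$ term of the statement. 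Finally, exactly as in the proof of the previous proposition, I would invoke the at-the-money forward identities for the Black-Scholes function, which give $\partial_{S}BS(t,S^{*},x)=\tfrac{BS(t,S^{*},x)+S}{2S}$, to rewrite $\partial_{S}BS(t,S^{*}(t),I^{0}(t))-\partial_{S}BS(t,S^{*}(t),I(S^{*}(t)))$ as $\tfrac{1}{2S}\big(BS(t,S^{*}(t),I^{0}(t))-BS(t,S^{*}(t),I(S^{*}(t)))\big)$ and then express this bracket, through the decompositions of $V$ and of its uncorrelated counterpart $V^{0}$, in terms of the integrals of $F_{1}$ and $F_{3}$, arriving at the announced formula.

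I expect the main obstacle to be the rigorous justification of interchanging $\partial_{S}$ with $\mathbb{E}_{t}[\,\cdot\,]$ and with the time integral in the Malliavin decomposition, that is, checking that the differentiated integrands $\partial_{S}F_{i}$ (including, in the non-exponential case, the additional $\partial_{S}L$ and chain-rule contributions absent when $\theta=\sigma S$) are integrable; this requires the same mollifier / approximation-to-the-identity device used in Theorem \ref{Teo-Malliavin}, together with the nondegeneracy assumption $(\bar{\sigma}(t))^{-1}<\infty$ a.s.\ to keep the vega in the denominator under control. The remaining ingredients --- the chain rule for the $F_{i}$ and the at-the-money forward Black-Scholes identities --- are routine.
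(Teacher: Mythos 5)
Your proposal is correct and follows essentially the same route as the paper, which for this proposition simply refers back to the preceding (non-anticipative) proof: differentiate $V(t)=BS(t,S(t),I(S(t)))$ and the decomposition of Theorem \ref{Teo-Malliavin}, evaluate at $S^{*}(t)=Ke^{r(T-t)}$, invoke the $\rho=0$ flatness result of \cite{RT} to introduce $I^{0}$ and $F_{3}$, and finish with the at-the-money forward identity $\partial_{S}BS=\tfrac{BS+S}{2S}$. Your added remarks on passing $\partial_{S}$ through $\mathbb{E}_{t}[\,\cdot\,]$ for the leading term $\mathbb{E}_{t}[BS(t,S(t),\bar{\sigma}(t))]$ and on the integrability of the $\partial_{S}F_{i}$ are the right technical caveats and are consistent with the paper's (implicit) argument.
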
 

\begin{proof}
See \cite{ALV} or the previous proof.
\end{proof}

\begin{Rem}
Note that this is generalisation of the formula proved in \cite{ALV}. In that case, $F_{1}=F_{2}$ and $F_{3}=0$. 
\end{Rem}

\section{Examples}
\hspace{0.4cm} Now, we give some applications of the decomposition formula for well-known models in Finance.\par

\subsection{Heston Model.}
\hspace{0.4cm}We consider that the stock prices  follows the Heston Model \eqref{general_model}. Using Theorem \ref{Teo-Ito} or Theorem \ref{Teo-functional}, we have

\begin{eqnarray}
\nonumber	V(t) &=&  BS(t,X(t),v(t)) \\ \nonumber
&+& \frac{\rho}{2}\mathbb{E}_{t}\left[\int^{T}_{t}e^{-r(u-t)} H(u,X(u),v(u)) \left(\int^{T}_{u} e^{-k(r-s)} dr\right) \sigma^{2}(u)\nu du\right]\\ \nonumber
&+& \frac{1}{8} \mathbb{E}_{t}\left[\int^{T}_{t}e^{-r(u-t)}  K(u,X(u),v(u))\left(\int^{T}_{u} e^{-k(r-s)} dr\right)^{2} \nu^{2} \sigma^{2}(u) du\right].
\end{eqnarray}

Using Theorem \ref{Teo-Malliavin}, we have that 
\begin{eqnarray}
\nonumber	V(t)&=& BS(t,S_{t},\bar{\sigma}(t))\\ \nonumber
&+&\frac{\rho}{2}\mathbb{E}_{t}\left[\int^{T}_{t} e^{-r(u-t)} H(u,S(u), \bar{\sigma}(u))\left(\int^{T}_{u}D^{W}_{u} \sigma^{2}(r) dr\right)\sigma(u)du\right].
\end{eqnarray}
where $D^{W}_{u} \sigma^{2}(r) = \nu \sigma(u) \exp\left(\frac{\nu}{2}\int^{r}_{u} \frac{1}{\sigma(s)} dW(s) + \int^{r}_{u}\left[-k - \frac{\nu^{2}}{8\sigma^{2}(s)}\right]ds\right).$

\subsection{SABR Model}
\hspace{0.4cm} We consider that the stock prices follows the SABR model \eqref{SABR}. Using Theorem \ref{Teo-Ito} or Theorem \ref{Teo-functional}, we have

\begin{eqnarray}
\nonumber	V(t)&=& BS(t,S(t),v(t))\\ \nonumber
&+&\frac{1}{2}\mathbb{E}_{t}\left[\int^{T}_{t} e^{-r(u-t)} G(u, S(u), v(u)) \sigma^{2}(u)\left(S^{2(\beta-1)}(u)-1\right)du\right]\\ \nonumber
&+&\frac{1}{8}\mathbb{E}_{t}\left[\int^{T}_{t} e^{-r(u-t)} K(u, S(u), v(u)) d\left[M,M\right]\right](u)\\ \nonumber
&+&\frac{\rho}{2}\mathbb{E}_{t}\left[\int^{T}_{t} e^{-r(u-t)} H(u, S(u), v(u))\sigma(u)d\left[W,M\right](u)\right].
\end{eqnarray}
where 

$$d\left[M,M\right]= 4 \alpha^{2} \sigma^{4}(t) \left(\int^{T}_{t} e^{\alpha^{2}(s-t)} ds\right)^{2}dt$$ 
and 

$$d\left[M,W\right]= 2 \alpha \sigma^2 (t) \left(\int^{T}_{t} e^{\alpha^{2}(s-t)} ds\right)dt.$$

Using Theorem \ref{Teo-Malliavin}, we have that 
\begin{eqnarray}
\nonumber	V(t)&=& \mathbb{E}_{t}\left[BS(t,S(t),\bar{\sigma}(t))\right]\\ \nonumber
&+& \frac{1}{2}\mathbb{E}_{t}\left[\int^{T}_{t} e^{-r(u-t)} G(u,S(u), \bar{\sigma}(u))\sigma^{2}(u) \left(S^{2(\beta-1)}(u)-1\right)du\right]\\ \nonumber
&+&\frac{\rho}{2}\mathbb{E}_{t}\left[\int^{T}_{t} e^{-r(u-t)} H(u,S(u), \bar{\sigma}(u)) \left(\int^{T}_{u}D^{W}_{u} \sigma^{2}(r) dr\right) \sigma(u)du\right].
\end{eqnarray}
where $D^{W}_{u} \sigma^{2}(r)=2 \alpha \sigma^{2}(u){1\!\!1}_{[0,r]}(u).$

\section{Conclusion}
\hspace{0.4cm} In this paper, we notice that the idea used in \cite{Alos} can be used for a generic Stochastic Differential Equation (SDE). There is no need to specify the volatility process, only existence and uniqueness of the solution of the SDE are needed, allowing much more flexibility in the decomposition formula. We see the effect on assuming that the stock price follows a exponential process and how a new term arises in a general framework. Also, we have computed the decomposition using three different method: Itô formula, functional Itô calculus and  Malliavin calculus. In the case of call options, the idea used in \cite{Alos} is equivalent to the use of the functional Itô formula developed in \cite{Cont1, Cont2, Cont3, Dupire}, but without the need of the theory behind the functional Itô calculus. Both formulas can be applied to Dirichlet process, in particular, to the fractional Brownian motion with Hurst parameter equal or bigger than $\frac{1}{2}$. Furthermore, we realize that the Feynman-Kac formula has a key role into the decompossition process.

{ }

\end{document}